\newtheorem{theorem}{Theorem}[section]
\newtheorem{corollary}{Corollary}[section]
\newtheorem{definition}{Definition}[section]
\newtheorem{lemma}[theorem]{Lemma}
\newtheorem{proposition}{Proposition}[section]
\newtheorem{remark}{Remark}
\newtheorem*{problem statement}{Problem Statement}
\newtheorem{feasibility problem}{Problem}
\newtheorem*{problem}{Problem}
\newtheorem{assumption}{Assumption}
\definecolor{mygreen}{RGB}{0,108,0}
\definecolor{myred}{RGB}{108,0,0}
\newcommand{\real}{\mathbb{R}}
\newcommand{\reg}{\mathcal{R}}
\newcommand{\setz}{\mathrm{Z}}
\newcommand{\sete}{\mathrm{E}}
\begin{document}
	\title{Isochronous Partitions for Region-Based Self-Triggered Control}
	\author{Giannis Delimpaltadakis and Manuel Mazo Jr.\thanks{The authors are with the Delft Center for Systems and Control, Delft University of Technology, Delft 2628CD, The Netherlands. Emails:\texttt{\{i.delimpaltadakis, m.mazo\}@tudelft.nl}. This work is supported by the ERC Starting Grant SENTIENT (755953).}}
	\date{}
	\maketitle
	\graphicspath{{./Figures2/}}
	\begin{abstract}
		In this work, we propose a \textit{region-based} self-triggered control (STC) scheme for nonlinear systems. The state space is partitioned into a finite number of regions, each of which is associated to a uniform inter-event time. The controller, at each sampling time instant, checks to which region does the current state belong, and correspondingly decides the next sampling time instant. To derive the regions along with their corresponding inter-event times, we use approximations of \textit{isochronous manifolds}, a notion firstly introduced in \cite{anta2012isochrony}. This work addresses some theoretical issues of \cite{anta2012isochrony} and proposes an effective computational approach that generates approximations of isochronous manifolds, thus enabling the region-based STC scheme. The efficiency of both our theoretical results and the proposed algorithm are demonstrated through simulation examples.
	\end{abstract}
	\section{Introduction}
	Control laws are, most often, implemented in a periodic fashion. However, despite periodic implementations facilitating controller design, they lead to overconsumption of available resources. Especially in Networked Control Systems (NCS) such implementations are considered inefficient, due to potential limitations on communication bandwidth. The need for resource-friendly control implementations has shifted the research focus to aperiodic schemes, namely Event-Triggered Control (ETC) \cite{arzen1999, heemels1999,tabuada2007etc,mazo2011decentralized, girard2015dynamicetc,antunes2015etc,lunze2011etc,lunze2010state} and Self-Triggered Control (STC) \cite{velasco2003self,tosample,anta2012isochrony,italy2013stcnonlinear,kallej2014stcnonlinear,mazo2010stciss,mazo2014stclinear,tolic2012stc,heemels2014stclq,fiter2012state,wang2009stc,wang2010stc,theodosis2018self}. For an introduction to STC and ETC see \cite{2012introtoetc_stc}. 
	
	These strategies assume sample-and-hold implementations, in which the control action is updated when a certain performance-related condition (\textit{triggering condition}) is satisfied. Triggering conditions are of the form $\phi(\zeta(t))\geq0$, where $\phi(\zeta(t))$ is a function of the state of the system, namely the \textit{triggering function}, e.g. see \cite{tabuada2007etc,girard2015dynamicetc}. Specifically in ETC, dedicated intelligent hardware constantly monitors the plant and detects when the triggering condition is satisfied. To relax this constraint, researchers have proposed STC as an alternative, according to which the controller predicts at each sampling time instant the next time at which the triggering condition would be satisfied. In this way, both ETC and STC promise to reduce the number of communication packets' transmissions and controller updates, thus saving both bandwidth and energy.
	
	Regarding STC for nonlinear systems, the amount of published work is limited. In \cite{tosample} the authors derive STC formulas employing interesting properties of homogeneous systems. Based on these properties, a different STC formula is proposed in \cite{anta2012isochrony}, employing the notion of isochronous manifolds.  In \cite{italy2013stcnonlinear}, a Taylor expansion of the Lyapunov function is used to predict the triggering times. In \cite{tolic2012stc} a self-triggered scheme is derived, based on a small-gain approach. In \cite{kallej2014stcnonlinear} a triggering condition that guarantees uniform ultimate boundedness for perturbed nonlinear systems is presented, and a corresponding self-triggered sampler is derived. Finally, the work in \cite{theodosis2018self} designs an STC scheme that copes with actuator delays.
	
	The STC formula proposed in \cite{tosample} proves to be conservative, i.e. it leads to a large amount of updates, at least when compared to the technique proposed here. This argument is illustrated in one of the simulation examples later in the document. What is more, the authors of \cite{theodosis2018self} admit that, although it addresses actuator delays, it is even more conservative than \cite{tosample}. Regarding \cite{anta2012isochrony} there are certain theoretical and practical issues, which are presented later in the introduction and are thoroughly discussed in this document. An important drawback of the rest of the STC techniques is that they require heavy computations that need to be carried out online. 
	
	A clever way to provide a trade-off between online computations and the number of updates in STC has already been proposed for linear systems with state feedback in  \cite{fiter2012state}. In particular, the authors in \cite{fiter2012state} discretize the state space of a linear system into a finite number of regions, assigning a particular self-triggered inter-event time to each region that lower bounds the event-triggered inter-event times of all points contained in that region. The computation of the self-triggered inter-event time for each region is carried out offline. Finally, in real-time the controller checks to which region of the state space does the current state belong and assigns to it the inter-event time of the corresponding region. To the best of our knowledge, there are no similar results for nonlinear systems.
	
	Motivated by the advantages of \cite{fiter2012state}, in this work we derive a \textit{region-based} STC scheme for nonlinear systems. In contrast to \cite{fiter2012state}, in which the state space is firstly discretized and afterwards the corresponding self-triggered inter-event times are computed, we propose to firstly predefine a set of specific inter-event times and afterwards derive the regions that correspond to the selected times. Thus, in our approach the number of regions in the state space is always equal to the number of times. This renders our approach more efficient and tames the curse of dimensionality, as the number of regions is independent of the dimensions of the system. 
	
	Towards discretizing the state space of nonlinear systems, we elaborate on the notion of \textit{isochronous manifolds}, originally introduced in \cite{anta2012isochrony}. Isochronous manifolds are hypersurfaces in the state space, that consist of points associated to the same inter-event time $\tau$, i.e. if the system's state belongs to an isochronous manifold at a sampling time $t_i$, then the next sampling time instant is $t_{i+1}=t_i+\tau$. In \cite{anta2012isochrony}, the authors propose a method to approximate these manifolds by upper-bounding the evolution of the triggering function, and then use the approximations to derive an STC formula. Unfortunately, there are some unaddressed theoretical and practical issues therein, which render the approximations, in general, invalid and hinder the application of the corresponding STC scheme. In particular, the bounding lemma presented in \cite{anta2012isochrony} (Lemma V.2 in \cite{anta2012isochrony}), based on which the upper-bounds of the triggering function are derived, is incorrect. Furthermore, we show that, even if a valid bound is obtained, the method proposed in \cite{anta2012isochrony} actually approximates the \textit{zero-level sets of the triggering function}, and not the actual isochronous manifolds. Finally, although the authors in \cite{anta2012isochrony} propose the use of SOSTOOLS \cite{sostools} to derive the approximations, we have found it to be numerically non-robust regarding solving this particular problem. 
	
	This paper tackles all of the aforementioned issues, in order to derive a discretization of the state space for nonlinear systems that enables a region-based STC scheme. Overall, the contributions of our work are the following:
	\begin{itemize}
		\item We present a valid version of the bounding lemma, based on a higher order comparison lemma \cite{comparison1971}.
		\item Employing this new lemma, we propose a refined methodology to approximate the actual isochronous manifolds of nonlinear ETC systems.
		\item We adjust a counter-example guided iterative method (see e.g. \cite{counterexample}) combining Linear Programming and SMT (Satisfiability Modulo Theory) solvers (e.g. \cite{dreal}), to derive an alternative algorithm that effectively computes approximations of isochronous manifolds.
		\item We derive a novel region-based STC scheme that provides a framework to trade-off online computational load with the number of updates. 
	\end{itemize}
	Finally, it is worth noting that isochronous manifolds are an inherent characteristic of any system with an output. Thus, as in \cite{anta2012isochrony}, the theoretical contribution of deriving approximations of isochronous manifolds might even exceed the context in which this paper is written.	
	
	\section{Notation and Preliminaries}
	\subsection{Notation}
	We denote points in $\mathbb{R}^n$ as $x$ and their Euclidean norm as $|x|$. We use the symbol $\exists!$, to denote existence and uniqueness. For $x,y\in\mathbb{R}^n$, we write $x\preceq y$ if $x_i\leq y_i$ ($i=1,\dots,n$), where the subscript $i$ denotes the $i$-th component of the corresponding vector. When there is no harm from ambiguity, the subscript $i$ may be, also, used to denote different points $x_i\in\mathbb{R}^n$.
	
	If $f:\mathbb{R}^n\rightarrow\mathbb{R}^m$ is $p$-times continuously differentiable, we write $f \in \mathcal{C}^p$. Let $X:M\rightarrow TM$ be a vector field and $h:M\rightarrow \mathbb{R}$ a map. $\mathcal{L}_Xh(x)$ denotes the Lie derivative of $h$ at a point $x$ along the flow of $X$. Similarly, $\mathcal{L}_X^k h(x)= \mathcal{L}_X(\mathcal{L}_X^{k-1} h(x))$ is the $k$-th Lie derivative with $\mathcal{L}_X^0 h(x) = h(x)$.
	
	Consider a system of first order differential equations:
	\begin{equation} \label{ode}
		\dot{\zeta}(t) = f(t, \zeta(t)).
	\end{equation}
	The solution of \eqref{ode} with initial condition $\zeta_0$ and initial time $t_0$ is denoted as $\zeta(t;t_0,\zeta_0)$. When $t_0$ (and $\zeta_0$) is clear from the context, then it is omitted, i.e. we write $\zeta(t;\zeta_0)$ ($\zeta(t)$).
	\subsection{Event-Triggered Control Systems}
	Consider a nonlinear control system:
	\begin{equation} \label{ct system}
		\dot{\zeta}(t) = f(\zeta(t),\upsilon(\zeta(t))),
	\end{equation}
	where $\zeta: \mathbb{R} \rightarrow \mathbb{R}^n$, $f:\mathbb{R}^n\times\mathbb{R}^m \rightarrow \mathbb{R}^n$, and a feedback control law $\upsilon: \mathbb{R}^n \rightarrow \mathbb{R}^m$. 
	A sample-and-hold implementation of \eqref{ct system} is typically applied by sampling the state of the system $\zeta(t)$ at time instants $t_i$, $i=0,1,2,...$, evaluating the input $\upsilon(\zeta(t_i))$ and keeping it constant until the next sampling time:
	\begin{equation*}
		\dot{\zeta}(t) = f(\zeta(t),\upsilon(\zeta(t_i))), \quad t\in[t_i, t_{i+1}).
	\end{equation*}
	We define the measurement error $\varepsilon(t)$ as the difference between the last measured state and the current state:
	\begin{equation}
		\varepsilon(t):=\zeta(t_i)-\zeta(t), \quad t\in[t_i, t_{i+1}).
		\label{measurement error}
	\end{equation}
	As soon as the updated control input is applied at each sampling time $t=t_i$, the state is measured and the error becomes 0, since $\zeta(t)=\zeta(t_i)$. With this definition, the sample-and-hold closed loop becomes:
	\begin{equation}
		\dot{\zeta}(t) = f(\zeta(t),\upsilon(\varepsilon(t)+\zeta(t))).
		\label{sample and hold system}
	\end{equation}
	
	In ETC the sampling time instants, or \textit{triggering times}, are defined as follows:
	\begin{equation} \label{triggering condition}
		t_{i+1} = t_{i} + \mathrm{inf}\{t>0: \phi(\zeta(t;x_{i}), \varepsilon(t;0))=0\}
	\end{equation}
	and $t_0=0$, where $x_{i}$ corresponds to the last measurement of the state of the plant. We call \eqref{triggering condition} the \textit{triggering condition},  $\phi(\cdot,\cdot)$ the  \textit{triggering function}, and the difference $t_{i+1}-t_i$ \textit{inter-event time}. Each point $x_i$ in the state space of the system, corresponds to a specific inter-event time denoted by $\tau(x_i)$:
	\begin{equation} \label{triggering time definition}
		\tau(x_i) := \mathrm{inf}\{t>0: \phi(\zeta(t;x_{i}), \varepsilon(t;0))= 0\}.
	\end{equation}
	During the interval $[t_i,t_{i+1})$, the triggering function starts from a negative value and remains negative until $t_{i+1}$. At $t_{i+1}$ it becomes zero. Typically, it is designed such that $\phi(\zeta(t;x_{i}), \varepsilon(t;0))\leq 0$ implies certain stability guarantees for the system. This justifies the choice \eqref{triggering condition} of sampling times.
	
	If we consider the extended state vector $\xi(t)=$$\begin{bmatrix}
		\zeta^{\top}(t)
		&\varepsilon^{\top}(t)
	\end{bmatrix} ^{\top}$$\in\real^{2n}$,
	the ETC system is written in a compact way:
	\begin{equation} \label{etc system}
		\begin{aligned}
			&\dot{\xi}(t)= \begin{bmatrix} f(\zeta(t),\upsilon(\zeta(t)+\varepsilon(t))\\
				-f(\zeta(t),\upsilon(\zeta(t)+\varepsilon(t)) \end{bmatrix}=F(\xi(t)), \text{ } t \in [t_i, t_{i+1}),\\
			&\xi_1(t_{i+1}^+)=\xi_1(t_{i+1}^-),\\
			&\xi_2(t_{i+1}^+)=0.
		\end{aligned}
	\end{equation}
	\begin{remark}
		Our analysis is carried out within the time interval $[0,t_{i+1}-t_i)=[0,\tau(x_i))$. Due to time-invariance of $F(\cdot)$, $\phi(\cdot)$, this is equivalent to analyzing within the interval $[t_i,t_{i+1})$.
	\end{remark}
	
	At any sampling time $t_i$, the state of \eqref{etc system} becomes $\xi(t_i)=(\zeta(t_i),0)=(x_i,0)$. Since we consider intervals between two sampling times, we focus on solutions $\xi(t;\xi_i)$ with $\xi_i=(x_i,0)$. Thus, we adopt the abusive notation $\phi(\xi(t;x_i))$, $\tau(x_i)$ (or later $\psi(x_i,t)$, $\mu(x_i,t)$) instead of $\phi(\xi(t;\xi_i))$, $\tau(\xi_i)$.
	\subsection{Self-Triggered Implementation}
	As aforementioned, self-triggered implementations remove the need for continuous monitoring of the triggering condition \eqref{triggering condition}, by predicting events $\phi(\xi(t;x))=0$. Specifically, an STC strategy dictates the next sampling time according to a function $\tau^{\downarrow}:\mathbb{R}^n\rightarrow\mathbb{R}^+$ lower-bounding the ETC inter-event times:
	\begin{equation} \label{stc bounds etc}
		\tau^{\downarrow}(x) \leq \tau(x).
	\end{equation}
	Since $\phi(\xi(t;x))<0$ for all $t\in[0,\tau(x))$, then it is guaranteed that $\phi(\xi(t;x))<0$ for all $t\in[0,\tau^{\downarrow}(x))$, and the stability of the system is preserved. Consequently, the STC inter-event times should be no larger than the corresponding ETC times in order to guarantee stability, but as large as possible in order to achieve greater reduction of updates. Finally, $\tau^{\downarrow}(\cdot)$ should be designed such that $\tau^{\downarrow}(x)\geq\epsilon>0$ for all $x$ in the operating region of the system, in order to avoid the scenario of infinite transmissions in finite amount of time (Zeno phenomenon).
	
	\section{Problem Statement}
	Inspired by \cite{fiter2012state}, the goal of this paper is to design a region-based STC scheme for nonlinear systems, providing a framework for trade-off between online computations and updates. In a region-based STC scheme, the state-space of the original system \eqref{sample and hold system} is divided into a finite number of regions $\mathcal{R}_i\in\mathbb{R}^n$ ($i=1,2,\dots$), each of which is associated to a self-triggered inter-event time $\tau_i$ such that:
	\begin{equation} \label{regions-times}
		\forall x \in \mathcal{R}_i:\quad \tau_i\leq\tau(x),
	\end{equation}
	where $\tau(x)$ denotes the event-triggered inter-event time associated to $x$ (see \eqref{triggering time definition}). The STC scheme operates as follows:
	\begin{enumerate}
		\item Measure the current state $\xi(t_k)=(x_k,0)$.
		\item Check to which of the regions $\mathcal{R}_i$ does $x_k$ belong.
		\item If $x_k\in\mathcal{R}_i$, set the next sampling time to $t_{k+1} = t_k + \tau_i$.
	\end{enumerate}
	The STC scheme preserves stability of the system, since the STC inter-event times lower bound the ETC ones (see \eqref{regions-times}).
	
	In \cite{fiter2012state} the state-space is discretized into regions $\mathcal{R}_i$ \textit{a-priori}, and afterwards the times $\tau_i$ are computed such that they satisfy \eqref{regions-times}. However, we propose an alternative approach: firstly a finite set of times $\{\tau_1,\tau_2,\dots\,\tau_q\}$ is \textit{predefined} (e.g. by the user), which will serve as STC inter-event times, with $\tau_i<\tau_{i+1}$, and then regions $\mathcal{R}_i$ corresponding to times $\tau_i$ are derived \textit{a-posteriori}, such that \eqref{regions-times} is satisfied. In this way, the number of regions is equal to the number of times $\tau_i$, in contrast to \cite{fiter2012state}, and the curse of dimensionality is tamed, as the number of regions does not depend on the system's dimensionality. Thus, the problem statement is as follows:
	\begin{problem statement}
		Given a finite set of times $\{\tau_1,\dots\,\tau_q\}$, with $\tau_i<\tau_{i+1}$ and $q>1$, find $\mathcal{R}_i\in\mathbb{R}^n$ that satisfy \eqref{regions-times}.
	\end{problem statement}
	Note that Zeno behaviour is ruled out by construction, since the STC inter-event times are lower bounded: $\tau^{\downarrow}(x)\geq\min_{i}\{\tau_i\}=\tau_1$. The choice of times $\tau_i$ and its effect is discussed later in the document.
	\section{Isochronous Manifolds, Triggering Level Sets and Discretization}
	
	Here, we recall results from \cite{anta2012isochrony} regarding isochronous manifolds, we introduce the notion of \textit{triggering level sets} and describe how isochronous manifolds and triggering level sets are different. Finally, we point out how, given proper approximations of isochronous manifolds, a state-space discretization state space is generated, enabling a region-based STC scheme.
	
	\subsection{Homogeneous Systems and Scaling of Inter-Event Times}
	First, we briefly go through some definitions regarding homogeneous functions and systems, and results previously derived in \cite{tosample} regarding scaling laws for inter-event times of homogeneous systems. Regarding the former, only the classical notion of homogeneity is presented. For the general definition of homogeneity, the reader is referred to \cite{homogeneity}.
	\begin{definition}[Homogeneous Function \cite{anta2012isochrony}]\label{homogeneous function}
		A function $f: \mathbb{R}^n \rightarrow \mathbb{R}^m$ is homogeneous of degree $\alpha \in \mathbb{N}$, if there exist $r_i>0$ ($i=1,2,\dots,m$) such that for all $x\in\real^n$:
		\begin{equation*}
			f_i(\lambda^{r_1}x_1, \dots, \lambda^{r_n}x_n) = \lambda^{\alpha+r_i}f_i(x_1,\dots,x_n), \quad \forall \lambda>0,
		\end{equation*}
		where $f_i(x)$ is the $i$-th component of $f(x)$ and $\alpha>-\min_{i}r_i$.	
	\end{definition}
	\begin{definition}[Homogeneous System]
		A system \eqref{ct system} is called homogeneous of degree $\alpha \in \mathbb{R}$, whenever $f(\zeta(t),\upsilon(\zeta(t)))=\tilde {f}(\zeta(t))$ is a homogeneous function of the same degree.
	\end{definition} 
	
	We now review the scaling laws of inter-event times previously derived in \cite{tosample}. Along lines passing through the origin (but excluding the origin) the event-triggered inter-event times scale according to the following rule:
	\begin{theorem}[Scaling Law \cite{tosample}]
		Consider a dynamical system \eqref{etc system} homogeneous of degree $\alpha$ and a triggering function $\phi(\cdot)$ homogeneous of degree $\theta$. For all $x\in\real^n$, the inter-event times $\tau:\mathbb{R}^{n} \rightarrow \mathbb{R}^+\cup \{+\infty\}$ defined by \eqref{triggering time definition} scale as:
		\begin{equation} \label{scaling}
			\tau(\lambda x) = \lambda^{-\alpha} \tau(x), \quad \lambda>0.
		\end{equation}
	\end{theorem}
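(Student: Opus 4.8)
The plan is to derive the scaling law from two ingredients: a scaling property of the flow of a homogeneous vector field, and the sign-preservation of a homogeneous triggering function under dilations. Since the statement refers to scaling along rays through the origin, I would work with the dilation $\Delta_\lambda x=(\lambda^{r_1}x_1,\dots,\lambda^{r_n}x_n)$ from Definition~\ref{homogeneous function}, writing $\bar\Delta_\lambda$ for its action on the extended space $\real^{2n}$ with the same weights $r_i$ applied to both the $\zeta$- and the $\varepsilon$-blocks; the case $r_i=1$ recovers $\Delta_\lambda x=\lambda x$ as in the stated formula.

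First I would establish the flow-scaling identity
\[
\xi(t;\bar\Delta_\lambda\xi_0)=\bar\Delta_\lambda\,\xi(\lambda^{\alpha}t;\xi_0).
\]
To prove it, I would fix $\xi_0$, set $y(t):=\bar\Delta_\lambda\,\xi(\lambda^{\alpha}t;\xi_0)$, and differentiate componentwise: inserting $\dot\xi=F(\xi)$ gives $\dot y_i(t)=\lambda^{r_i+\alpha}F_i\big(\xi(\lambda^{\alpha}t;\xi_0)\big)$, while homogeneity of $F$ (degree $\alpha$) yields $F_i(y(t))=F_i\big(\bar\Delta_\lambda\xi(\lambda^{\alpha}t;\xi_0)\big)=\lambda^{\alpha+r_i}F_i\big(\xi(\lambda^{\alpha}t;\xi_0)\big)$. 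Hence $y(\cdot)$ solves the same system $\dot y=F(y)$ with $y(0)=\bar\Delta_\lambda\xi_0$, and uniqueness of solutions forces $y(t)=\xi(t;\bar\Delta_\lambda\xi_0)$, which is the claim.

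Next I would use that at a sampling instant the initial condition is $\xi_0=(x,0)$, so $\bar\Delta_\lambda\xi_0=(\Delta_\lambda x,0)$ — the reset structure of \eqref{etc system} is respected because the $\varepsilon$-block scales with the same weights and maps $0$ to $0$. Composing the flow-scaling identity with homogeneity of $\phi$ (degree $\theta$) gives
\[
\phi\big(\xi(t;(\Delta_\lambda x,0))\big)=\phi\big(\bar\Delta_\lambda\,\xi(\lambda^{\alpha}t;(x,0))\big)=\lambda^{\theta}\,\phi\big(\xi(\lambda^{\alpha}t;(x,0))\big).
\]
Since $\lambda^{\theta}>0$, the left-hand side vanishes exactly when $\phi\big(\xi(\lambda^{\alpha}t;(x,0))\big)=0$; that is, $t$ is a triggering time for $\Delta_\lambda x$ if and only if $\lambda^{\alpha}t$ is a triggering time for $x$. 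The triggering set for $\Delta_\lambda x$ is therefore the $\lambda^{-\alpha}$-rescaling of the one for $x$, and taking infima yields $\tau(\Delta_\lambda x)=\lambda^{-\alpha}\tau(x)$, which specializes to the stated $\tau(\lambda x)=\lambda^{-\alpha}\tau(x)$ (with the convention $\lambda^{-\alpha}\cdot(+\infty)=+\infty$ covering the case of an empty triggering set).

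I expect the main obstacle to be the first step: justifying the flow-scaling identity rigorously, which rests on (i) the extended field $F$ inheriting homogeneity of degree $\alpha$ under a dilation acting identically on $\zeta$ and $\varepsilon$ — this is where the joint homogeneity of $f$ and of the feedback $\upsilon$ is needed — and (ii) uniqueness of solutions of $\dot\xi=F(\xi)$. Once the flow scaling is in hand, the remainder is a short, sign-based argument. A secondary point to handle carefully is the boundary behaviour of $\tau$ (finite versus $+\infty$) and verifying that the $\lambda^{-\alpha}$-rescaling of the triggering set commutes with taking the infimum.
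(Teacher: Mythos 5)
Your argument is correct and takes essentially the same route the paper sketches (deferring details to \cite{tosample}): the flow-scaling identity $\xi(t;\lambda x)=\lambda\xi(\lambda^{\alpha}t;x)$ combined with homogeneity of $\phi$ gives the time-scaling property \eqref{trig_cond_scaling}, whose positive prefactor preserves the zero set of the triggering function and hence rescales the infimum defining $\tau$. The only discrepancy is cosmetic: under Definition \ref{homogeneous function} with $r_i=1$, a scalar $\phi$ of degree $\theta$ satisfies $\phi(\lambda x)=\lambda^{\theta+1}\phi(x)$, so your prefactor should be $\lambda^{\theta+1}$ rather than $\lambda^{\theta}$, which does not affect the sign-based conclusion.
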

	In the following, we refer to lines going through the origin as \textit{homogeneous rays}. Notice that the scaling law for the inter-event times \eqref{scaling} does not depend on the degree of homogeneity of the triggering function considered. The property derives from the following useful lemma:
	\begin{lemma}[Time-Scaling Property \cite{tosample}]
		Consider an ETC system \eqref{etc system} and a triggering function $\phi(\cdot)$ homogeneous of degree $\alpha$ and $\theta$, respectively. The triggering function satisfies:
		\begin{equation} \label{trig_cond_scaling}
			\phi(\xi(t;\lambda x)) = \phi(\lambda\xi(\lambda^{\alpha}t;x))=  \lambda^{\theta+1}\phi(\xi(\lambda^{\alpha}t;x)),
		\end{equation}
		where the first equality is a property of homogeneous flows. 	
	\end{lemma}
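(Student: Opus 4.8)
The plan is to split the claimed chain of equalities into its two links and treat them separately; the second is essentially a definition-chasing step, while the first carries the real content. Throughout I would fix the standard dilation (all weights $r_i=1$ in Definition~\ref{homogeneous function}), under which homogeneity of the ETC field of degree $\alpha$ reads $F(\lambda\eta)=\lambda^{\alpha+1}F(\eta)$ and homogeneity of the triggering function of degree $\theta$ reads $\phi(\lambda\eta)=\lambda^{\theta+1}\phi(\eta)$, for all $\eta\in\real^{2n}$ and all $\lambda>0$.

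For the second equality I would simply apply homogeneity of $\phi$ with $\eta=\xi(\lambda^{\alpha}t;x)$, obtaining $\phi(\lambda\xi(\lambda^{\alpha}t;x))=\lambda^{\theta+1}\phi(\xi(\lambda^{\alpha}t;x))$. Hence everything reduces to the first equality, i.e. to the scaling property of the extended flow $\xi(t;\lambda x)=\lambda\xi(\lambda^{\alpha}t;x)$, after which applying $\phi$ to both sides is immediate.

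To establish the flow-scaling property I would use the standard uniqueness argument. First note that scaling the measured state is the same as scaling the whole extended initial condition, since $\xi(0;\lambda x)=(\lambda x,0)=\lambda(x,0)=\lambda\,\xi(0;x)$. Then I would introduce the candidate trajectory $\psi(t):=\lambda\,\xi(\lambda^{\alpha}t;x)$ and verify it solves the extended dynamics: on one hand $\dot\psi(t)=\lambda^{\alpha+1}\dot\xi(\lambda^{\alpha}t;x)=\lambda^{\alpha+1}F(\xi(\lambda^{\alpha}t;x))$, and on the other hand, by homogeneity of $F$, $F(\psi(t))=F(\lambda\,\xi(\lambda^{\alpha}t;x))=\lambda^{\alpha+1}F(\xi(\lambda^{\alpha}t;x))$, so that $\dot\psi=F(\psi)$; together with $\psi(0)=\lambda(x,0)=\xi(0;\lambda x)$ this shows $\psi(\cdot)$ and $\xi(\cdot;\lambda x)$ solve the same initial value problem. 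Invoking uniqueness of solutions then gives $\xi(t;\lambda x)=\lambda\,\xi(\lambda^{\alpha}t;x)$, and applying $\phi$ yields the first equality.

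The step I expect to be the main obstacle is the appeal to uniqueness in the flow-scaling argument. A vector field homogeneous of non-integer or negative degree need not be locally Lipschitz at the origin, so uniqueness of solutions — and therefore the identity $\xi(t;\lambda x)=\lambda\,\xi(\lambda^{\alpha}t;x)$ — should be claimed only on homogeneous rays excluding the origin, which is exactly the regime in which the scaling law is used. The only other place where care is needed is the exponent bookkeeping: the two shifts by one in the exponent (one from the field, one from $\phi$) arise from the standard-dilation convention of Definition~\ref{homogeneous function}, and dropping either would misstate the power of $\lambda$ in the final equality.
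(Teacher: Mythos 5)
Your proposal is correct and matches the argument the paper implicitly appeals to: the lemma is recalled from \cite{tosample} without proof, with the first equality attributed to the scaling property of homogeneous flows (which you correctly establish via the candidate trajectory $\lambda\,\xi(\lambda^{\alpha}t;x)$ and uniqueness of solutions) and the second following directly from homogeneity of $\phi$ under the $r_i=1$ dilation. Your caveat about uniqueness near the origin is reasonable but moot here, since Assumption \ref{assumption 1} imposes smoothness and $\alpha\geq 1$, and the scaling law is only ever applied on homogeneous rays excluding the origin.
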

	\begin{assumption} \label{assumption 1}
		For the remaining of the paper, our analysis is based on the following set of assumptions:
		\begin{itemize}
			\item The extended ETC system \eqref{etc system} is smooth and homogeneous of degree $\alpha \geq 1$, with $r_i=1$ for all $i$.
			\item The triggering function $\phi(\xi(t;x))$ is smooth and homogeneous of degree $\theta \geq 1$, with $r_i=1$ for all $i$.
			\item For all $x\in\mathbb{R}^n-\{0\}$, $\phi(\xi(0;x))<0$ and $\exists t_x\in(0,+\infty)$ such that $\phi(\xi(t_x;x))=0$.
			\item Compact sets $\setz\subset\real^n$ and $\Xi\subset\real^{2n}$, containing a neighbourhood of the origin, are given, such that for all $x\in \setz$: $\phi(\xi(t;x))\leq0\implies \xi(t;x)\in \Xi$.
			\item The system \eqref{ct system} has the origin as the only equilibrium.
		\end{itemize}
	\end{assumption}
	\begin{remark}
		The aforementioned analysis and Assumption \ref{assumption 1} constitute the framework within which this work is carried out. Nevertheless, as pointed out in \cite{anta2012isochrony} (Lemma IV.4 therein),  any smooth function can be rendered homogeneous, if embedded in a higher dimensional space. Thus, our results are applicable to general smooth nonlinear systems and triggering functions. This is thoroughly discussed in Appendix D and showcased in Section VII.B via a numerical example.
		\label{framework remark}
	\end{remark}
	\begin{remark}
		The set $\Xi$ could be $\Xi = \setz\times\sete$, where $\setz=\{x\in\real^n:\text{ }V(x)\leq c\}$, $\sete=\{x_0-x\in\real^n:\text{ }x,x_0\in\setz\}$, $c>0$, and $V(\cdot)$ is a radially unbounded Lyapunov function for the ETC system. In most ETC schemes (e.g. \cite{tabuada2007etc}), $V(x)$ is given and the triggering function satisfies:  $\phi(\xi(t;x))\leq0\implies\dot{V}(\zeta(t;x))\leq 0$. Thus, trajectories of \eqref{etc system} starting from $\setz\times\{0\}\subset\Xi$ stay in $\Xi=\setz\times\sete$. The intuition behind this assumption is analyzed in Section V.C. An alternative way of constructing $\setz$ and $\Xi$ is demonstrated in Section VII.B.
		\label{z=lyap remark}
	\end{remark}
	\subsection{Isochronous Manifolds and Triggering level Sets}
	\begin{definition}[Isochronous Manifolds]
		Consider a closed loop system \eqref{etc system} and a triggering function $\phi(\cdot)$. The set $M_{\tau_{\star}}=\{x\in\mathbb{R}^n : \tau(x)=\tau_{\star}\}$, where $\tau(x)$ is defined by \eqref{triggering time definition}, is called an isochronous manifold of time $\tau_{\star}$.
		\label{manifold definition}
	\end{definition}
	Alternatively, all points $x\in \mathbb{R}^n$ which correspond to inter-event time $\tau_{\star}$ constitute the isochronous manifold $M_{\tau_{\star}}$. Isochronous manifolds are of dimension $n-1$ (proven in \cite{anta2012isochrony}). 
	\begin{definition}[Triggering Level Sets]
		We call the set:
		\begin{equation}
			L_{\tau_{\star}} := \{x\in \mathbb{R}^n: \phi(\xi(\tau_{\star};x))=0\}
			\label{level set}
		\end{equation}
		triggering level set of $\phi(\xi(\tau_{\star};x))$ for time $\tau_{\star}$.
	\end{definition}
	Triggering level sets are the zero-level sets of the triggering function, for fixed $t$.
	Let us now make a crucial observation: \emph{The equation $\phi(\xi(t;x))=0$ may have multiple solutions with respect to time $t$ for a given $x$.} In other words, there might exist points $x \in \mathbb{R}^n$ and time instants $\tau_{x,1}<\tau_{x,2}<...<\tau_{x,k}$, with $k>1$ such that $\phi(\xi(\tau_{x,i};x))=0$ for all $i=1, 2, ..., k$. We briefly present an example with a triggering function exhibiting multiple zero-crossings for given initial conditions:
	
	\textit{Example:} Consider the jet-engine compressor control system from \cite{jet_paper}:
	\begin{equation*}
		\dot{\xi}_1(t) = - \xi_2(t)- \frac{3}{2}\xi^2_1(t) - \frac{1}{2}\xi^3_1(t), \quad
		\dot{\xi}_2(t) = \upsilon(\xi(t)),
	\end{equation*}
	with control law $\upsilon(\xi(t))=\xi_1(t) -\frac{1}{2}(\xi_1^2(t)+1)(y + \xi_1^2(t)y +\xi_1(t)y^2) +2\xi_1(t)$, where $y=2\tfrac{\xi_1^2+\xi_2}{\xi_1^2+1}$. A triggering function that guarantees asymptotic stability is the following \cite{tosample}:
	\begin{equation*}
		\phi(\xi(t;x)) = |\varepsilon|^2-0.82\sigma^2|\xi(t;x)|^2 \text{,} \qquad \sigma \in (0,1).
	\end{equation*}
	\begin{figure}[!h]
		\centering
		\includegraphics[width=2in]{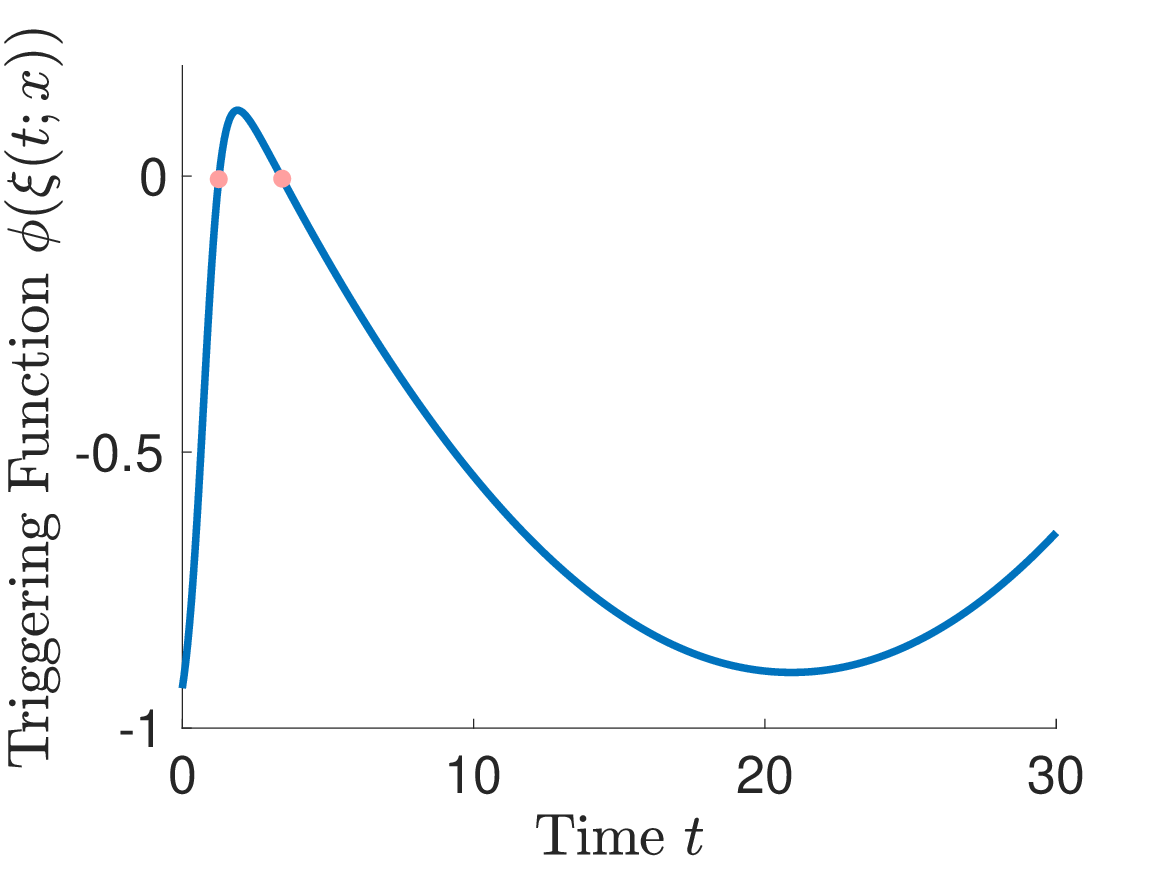}
		\caption{The time evolution of $\phi(x;t)$ for initial condition $[-0.5, -1]^{\top}$. It exhibits multiple zero-crossings.}
		\label{jet_trig}
	\end{figure} 
	The evolution of the triggering function $\phi(\xi(t;x))$ for the initial condition $[-0.5 -1]^{\top}$ is simulated and illustrated in Fig. \ref{jet_trig}. It is clear from the figure that it exhibits multiple zero-crossings, for $t=\tau_{x,1}\approx 1.15s$ and $t=\tau_{x,2}\approx 3.22s$.\hfill$\blacksquare$ 
	
	Inter-event times are defined as \textit{the first zero-crossing of the triggering function} (see \eqref{triggering time definition}), i.e. $\tau(x)=\tau_{x,1}$. Isochronous manifolds are defined with respect to this first zero-crossing, and any point $x \in \mathbb{R}^n-\{0\}$ belongs only to one isochronous manifold: $M_{\tau_{x,1}}$. However, the same point belongs to all triggering level sets $L_{\tau_{x,i}}$. For instance, in the previous example, the point $x=(-0.5, -1)$ belongs to both triggering level sets $L_{1.15}$ and $L_{3.22}$, whereas it belongs to only one isochronous manifold, i.e. $M_{1.15}$. In \cite{anta2012isochrony}, isochronous manifolds and triggering level sets are treated as if they were identical, which creates problems regarding approximating isochronous manifolds. This is addressed later in the document.
	
	\begin{remark} \label{one zero-crossing remark}
		If the triggering function $\phi(\xi(t;x))$ has only one zero-crossing for all $x\in\mathbb{R}^n-\{0\}$, then the triggering level sets do coincide with the isochronous manifolds, i.e. $M_{\tau_{\star}}=\{x\in\mathbb{R}^n : \tau(x)=\tau_{\star}\}= \{x\in \mathbb{R}^n: \phi(\xi(\tau_{\star};x))=0\} = L_{\tau_{\star}}$.
	\end{remark}
	
	Isochronous manifolds possess the two following properties:
	\begin{proposition}[\hspace{1sp}\cite{anta2012isochrony}]\label{star manifolds proposition}
		Consider an ETC system \eqref{etc system}, a triggering function $\phi(\cdot)$, and let Assumption \ref{assumption 1} hold. Each homogeneous ray intersects any isochronous manifold only at one point:
		\begin{equation} \label{star manifolds}
			\forall \tau_{\star}>0 \text{ and }\forall x \in \mathbb{R}^n-\{0\}: \text{ } \exists! \lambda_x>0 \text{ such that } \lambda_x x \in M_{\tau_{\star}}
		\end{equation}
	\end{proposition}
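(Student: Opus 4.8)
The plan is to reduce the claim to a direct inversion of the Scaling Law \eqref{scaling}, namely $\tau(\lambda x)=\lambda^{-\alpha}\tau(x)$. The key preliminary fact I would establish first is that, for every $x\in\mathbb{R}^n-\{0\}$, the inter-event time $\tau(x)$ defined by \eqref{triggering time definition} is both finite and strictly positive. This follows from the third bullet of Assumption \ref{assumption 1}: since $\phi(\xi(0;x))<0$ while $\phi(\xi(\cdot;x))$ is continuous (smoothness of $F$ and $\phi$), the set $\{t>0:\phi(\xi(t;x))=0\}$ is closed and, by the assumed existence of $t_x\in(0,+\infty)$, nonempty; hence its infimum is attained, is finite, and is strictly positive because $\phi<0$ at $t=0$. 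Thus $\tau(x)\in(0,+\infty)$, so the quotient $\tau(x)/\tau_\star$ and its real roots used below are well defined.

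For existence, fix $\tau_\star>0$ and $x\neq 0$ and seek $\lambda>0$ with $\tau(\lambda x)=\tau_\star$. Substituting the Scaling Law, this is equivalent to $\lambda^{-\alpha}\tau(x)=\tau_\star$, i.e. $\lambda^{\alpha}=\tau(x)/\tau_\star$. Since $\alpha\geq 1>0$ and the right-hand side is a positive real, the choice
\[
\lambda_x=\left(\frac{\tau(x)}{\tau_\star}\right)^{1/\alpha}
\]
is a well-defined positive number, and by construction $\tau(\lambda_x x)=\tau_\star$, that is, $\lambda_x x\in M_{\tau_\star}$.

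For uniqueness, suppose $\lambda_1,\lambda_2>0$ both satisfy $\tau(\lambda_i x)=\tau_\star$. The Scaling Law gives $\lambda_1^{-\alpha}\tau(x)=\lambda_2^{-\alpha}\tau(x)$, and cancelling the nonzero factor $\tau(x)$ yields $\lambda_1^{-\alpha}=\lambda_2^{-\alpha}$. Because $\lambda\mapsto\lambda^{-\alpha}$ is strictly decreasing (hence injective) on $(0,+\infty)$ for $\alpha>0$, we conclude $\lambda_1=\lambda_2$, which is precisely the uniqueness asserted by $\exists!$. I do not anticipate a genuine obstacle here, as the statement is essentially the geometric content of the Scaling Law. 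The only point demanding care — which I would flag explicitly — is the well-posedness step of the first paragraph, namely guaranteeing $\tau(x)\in(0,+\infty)$ for every nonzero $x$: without finiteness a ray might never reach $M_{\tau_\star}$, and without positivity the $\alpha$-th root would be ill defined. Both properties are exactly what the hypotheses collected in Assumption \ref{assumption 1} are designed to supply.
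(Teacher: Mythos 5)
Your proposal is correct and follows essentially the same route as the paper: both proofs invert the scaling law $\tau(\lambda x)=\lambda^{-\alpha}\tau(x)$ to get existence and use the strict monotonicity of $\lambda\mapsto\lambda^{-\alpha}$ for uniqueness. Your explicit verification that $\tau(x)\in(0,+\infty)$ via the third bullet of Assumption \ref{assumption 1} is a welcome addition that the paper's terser argument leaves implicit.
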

	\begin{proof}
		According to \eqref{scaling} and \eqref{trig_cond_scaling}, on any homogeneous ray, times vary from $0$ to $+\infty$ as $\lambda_x$ varies from $+\infty$ to $0$. Thus, for any $\tau_{\star}\in \mathbb{R}^+$ there exists a point $x$ on each ray such that $\tau(x)=\tau_{\star}$. What is more, equation \eqref{scaling} implies that there do not exist two different points on the same homogeneous ray that correspond to the same inter-event time.
	\end{proof}
	\begin{proposition} \label{encircle prop}
		Consider an ETC system \eqref{etc system}, a triggering function $\phi(\cdot)$, and let Assumption \ref{assumption 1} hold. Consider isochronous manifolds $M_{\tau_i}$ and $M_{\tau_{i+1}}$, with $\tau_{i}<\tau_{i+1}$. The following holds for all $x \in M_{\tau_i}$:
		\begin{equation} \label{encirclement}
			\exists! \lambda_x\in(0,1) \text{ s.t. } \lambda_x x \in M_{\tau_{i+1}} \wedge \not\exists \kappa_x\geq 1 \text{ s.t. } \kappa_x x \in M_{\tau_{i+1}}.
		\end{equation}
	\end{proposition}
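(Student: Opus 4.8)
The plan is to reduce the entire statement to the scaling law \eqref{scaling} combined with the one-point-per-ray property of Proposition \ref{star manifolds proposition}, so that essentially no new machinery is required. First I would fix an arbitrary $x\in M_{\tau_i}$, which by Definition \ref{manifold definition} means $\tau(x)=\tau_i$. Restricting attention to the homogeneous ray through $x$ and applying \eqref{scaling}, I obtain $\tau(\lambda x)=\lambda^{-\alpha}\tau(x)=\lambda^{-\alpha}\tau_i$ for every $\lambda>0$. Requiring $\lambda x\in M_{\tau_{i+1}}$, i.e. $\tau(\lambda x)=\tau_{i+1}$, yields the single scalar equation $\lambda^{-\alpha}\tau_i=\tau_{i+1}$, whose unique positive solution is $\lambda_x=(\tau_i/\tau_{i+1})^{1/\alpha}$.

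The first conjunct of \eqref{encirclement} then follows immediately: since $0<\tau_i<\tau_{i+1}$ we have $\tau_i/\tau_{i+1}\in(0,1)$, and because $\alpha\geq1>0$, raising this ratio to the power $1/\alpha$ keeps it in $(0,1)$, so $\lambda_x\in(0,1)$. Uniqueness of $\lambda_x$ is inherited directly from Proposition \ref{star manifolds proposition}, which guarantees that the ray through $x$ meets $M_{\tau_{i+1}}$ at exactly one point; alternatively, it is visible from the fact that $\lambda\mapsto\lambda^{-\alpha}\tau_i$ is strictly monotone in $\lambda$, so the equation $\lambda^{-\alpha}\tau_i=\tau_{i+1}$ admits a single root.

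For the second conjunct I would argue by the same uniqueness. Proposition \ref{star manifolds proposition} forbids a second intersection of the ray with $M_{\tau_{i+1}}$; since the only intersection sits at $\lambda_x<1$, any candidate $\kappa_x\geq1$ gives $\kappa_x x\neq\lambda_x x$ and hence $\kappa_x x\notin M_{\tau_{i+1}}$, which is exactly the assertion $\not\exists\,\kappa_x\geq1$ with $\kappa_x x\in M_{\tau_{i+1}}$. I do not expect a genuine obstacle here: the result is effectively a corollary of the scaling law, and the whole argument amounts to the statement that larger inter-event times correspond to points nearer the origin. The only points demanding care are to keep the scaling law applied to $\tau(\cdot)$, the \emph{first} zero-crossing to which \eqref{scaling} pertains, and to borrow uniqueness from Proposition \ref{star manifolds proposition} rather than re-deriving it from the level-set structure, since the triggering function may cross zero several times along a single ray.
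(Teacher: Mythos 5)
Your argument is correct and follows essentially the same route as the paper's proof: invoke Proposition \ref{star manifolds proposition} for existence and uniqueness of the intersection on the ray, then use the scaling law \eqref{scaling} to compute $\lambda_x=\sqrt[\alpha]{\tau_i/\tau_{i+1}}<1$, from which the nonexistence of any $\kappa_x\geq1$ follows at once. No gaps.
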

	\begin{proof}
		According to Proposition \ref{star manifolds proposition}, since each homogeneous ray intersects any isochronous manifold only at one point, $\exists! \lambda_x>0$ such that $\lambda_xx\in M_{\tau_{i+1}}$, where $x\in M_{\tau_i}$. From the scaling law \eqref{scaling} we get:
		\begin{equation*}
			\tau_{i+1}=\tau(\lambda_x x) = \lambda_x^{-\alpha}\tau_{i}\implies \lambda_x = \sqrt[\alpha]{(\tfrac{\tau_{i}}{\tau_{i+1}})}<1,
		\end{equation*}
		since $\tau_i<\tau_{i+1}$. There can be no other intersection of the homogeneous ray with $M_{\tau_{i+1}}$, i.e. $\not\exists \kappa_x\geq 1 \text{ s.t. } \kappa_x x \in M_{\tau_{i+1}}$.
	\end{proof}
	Proposition \ref{encircle prop} states that isochronous manifolds for smaller times are further away from the origin. Given \eqref{star manifolds}, in Fig. \ref{star and non star} the curve on the top could be an isochronous manifold of a homogeneous system, while the two bottom curves cannot.
	\begin{figure}[!h]
		\centering
		\includegraphics[width=2.5in]{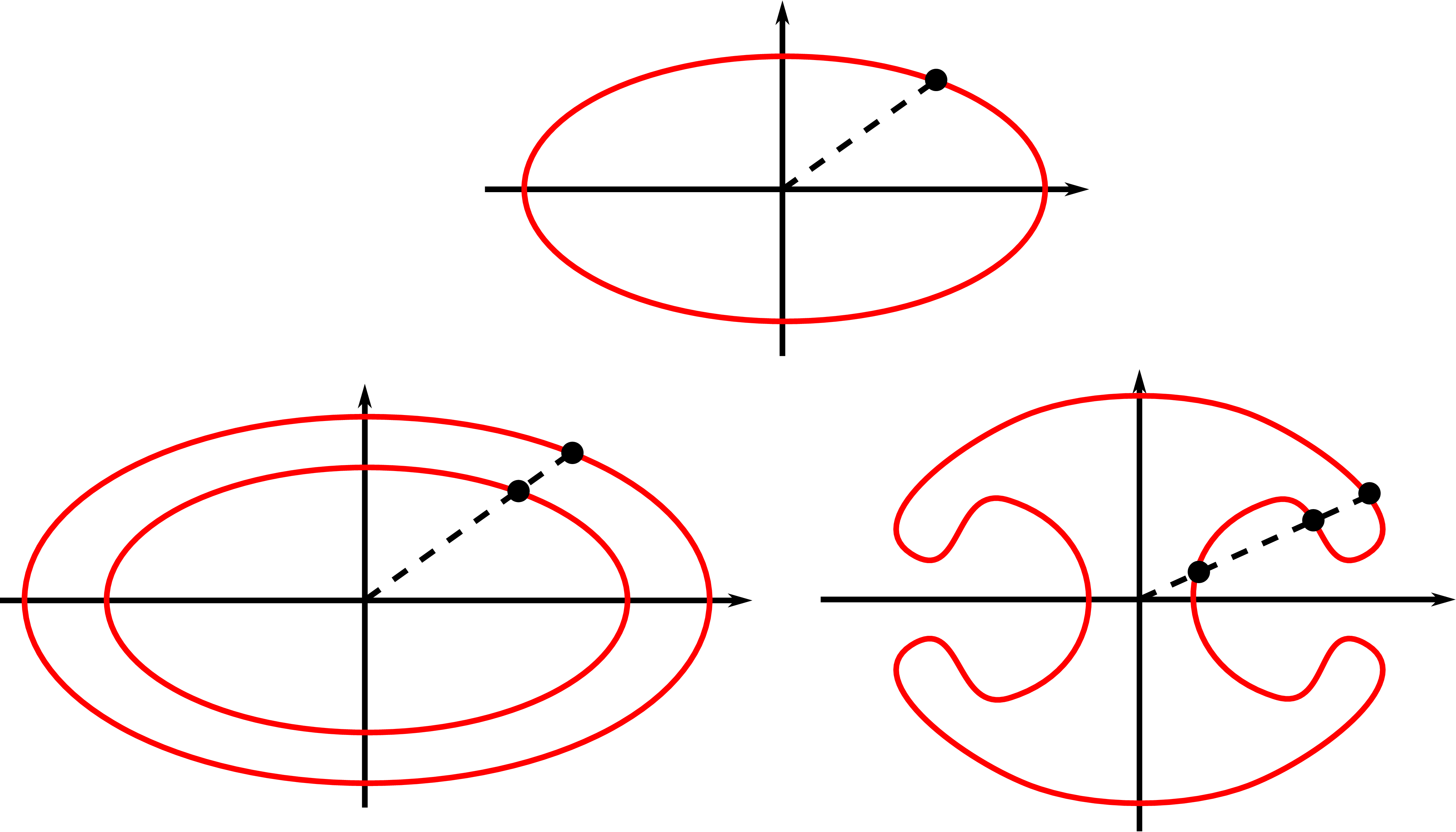}
		\caption{The curve on the top is intersected only once by each homogeneous ray, thus it could be an isochronous manifold of a homogeneous system. The two bottom curves are intersected by some homogeneous rays more than once, thus they cannot be isochronous manifolds of a homogeneous system.}
		\label{star and non star}
	\end{figure}
	\begin{remark} \label{homogeneity remark}
		Properties \eqref{star manifolds} and \eqref{encirclement} of isochronous manifolds result directly from the time scaling property \eqref{trig_cond_scaling}. 
	\end{remark}		
	
	\subsection{State-Space Discretization and a Self-Triggered Strategy}
	
	For the following, we assume that the system operates in an arbitrarily large compact set $\mathcal{B}$ the whole time. Assume that isochronous manifolds $M_{\tau_i}$ for $\tau_1<\tau_2<\tau_3$ are given, as illustrated in Fig. \ref{discr example}. We define the regions between manifolds as: 
	\begin{equation}\label{reg_between_mans}
		\begin{aligned}
			R_i = \{x\in \mathbb{R}^n: \exists \kappa_x\geq1 \text{ s.t. } \kappa_x x \in M_{\tau_i} \wedge\exists \lambda_x\in(0,1) \text{ s.t. } \lambda_x x \in M_{\tau_{i+1}}\}
		\end{aligned}
	\end{equation}
	for $\tau_i<\tau_{i+1}$, and the region enclosed by the manifold $M_{\tau_3}$ as $R_3 = \{x\in \mathbb{R}^n: \exists \kappa_x\geq1 \text{ s.t. } k_x x \in M_{\tau_3}\}$. Since \eqref{encirclement} holds, a region $R_i$ is the set with its outer boundary being $M_{\tau_i}$ and its inner boundary being $M_{\tau_{i+1}}$. 
	\begin{figure}[!h]
		\centering
		\includegraphics[width=2in]{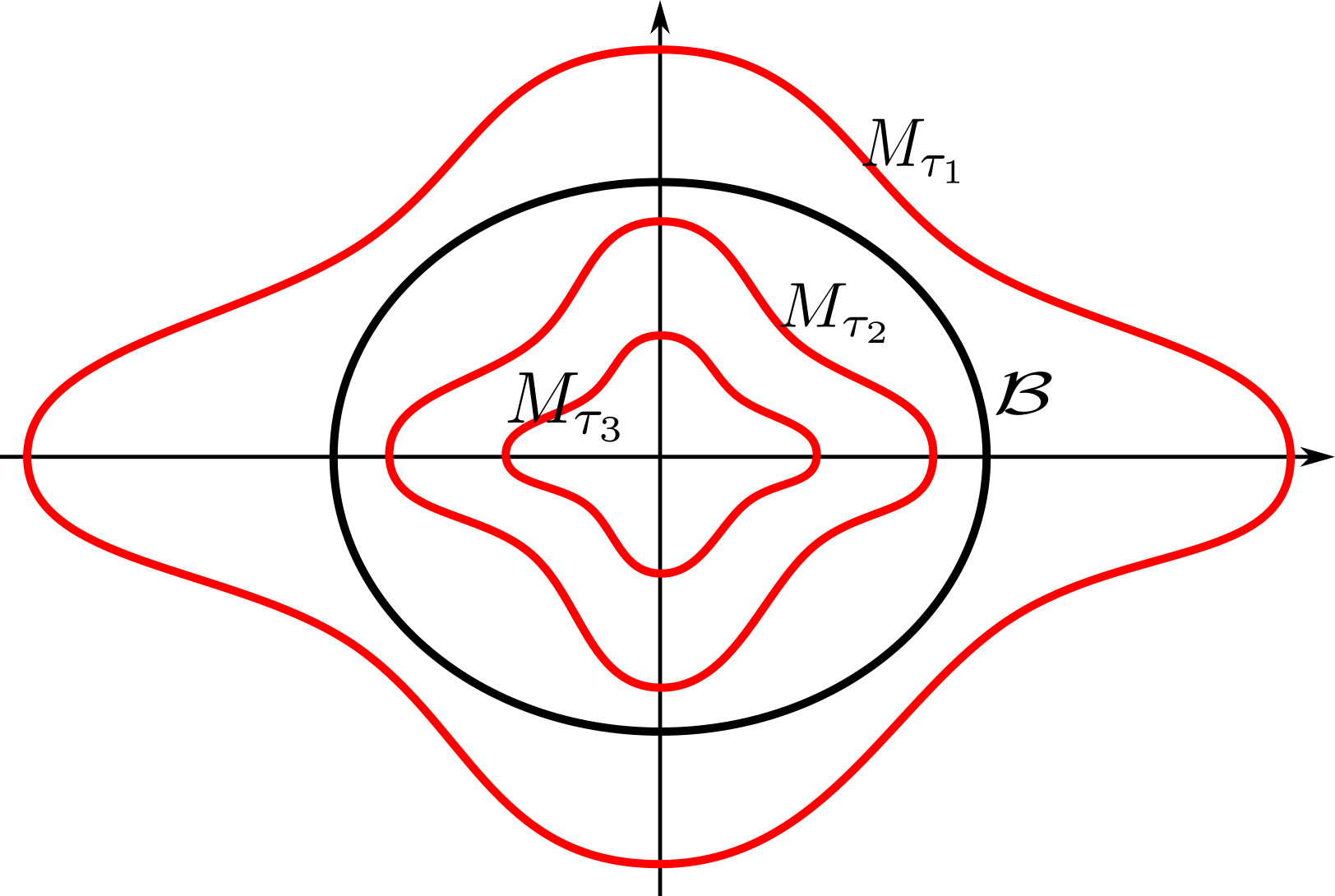}
		\caption{Isochronous manifolds $M_{\tau_1}$, $M_{\tau_2}$, $M_{\tau_3}$ (red lines) for $\tau_1<\tau_2<\tau_3$, and the operating region $\mathcal{B}$ (black line).}
		\label{discr example}
	\end{figure}
	The scaling law \eqref{scaling} implies that: $\tau(x)\geq\tau_i$ for all $x\in R_i$. Thus, isochronous manifolds could be employed for discretizing the state space in regions $R_i$ such that \eqref{regions-times} is satisfied. If isochronous manifolds did not satisfy property \eqref{star manifolds}, then the regions $R_i$ could potentially intersect with each other (see Fig. \ref{bad discr}). Hence, it would not be possible to derive a discretization as the one described.
	\begin{figure}[!h]
		\centering
		\includegraphics[width=1in]{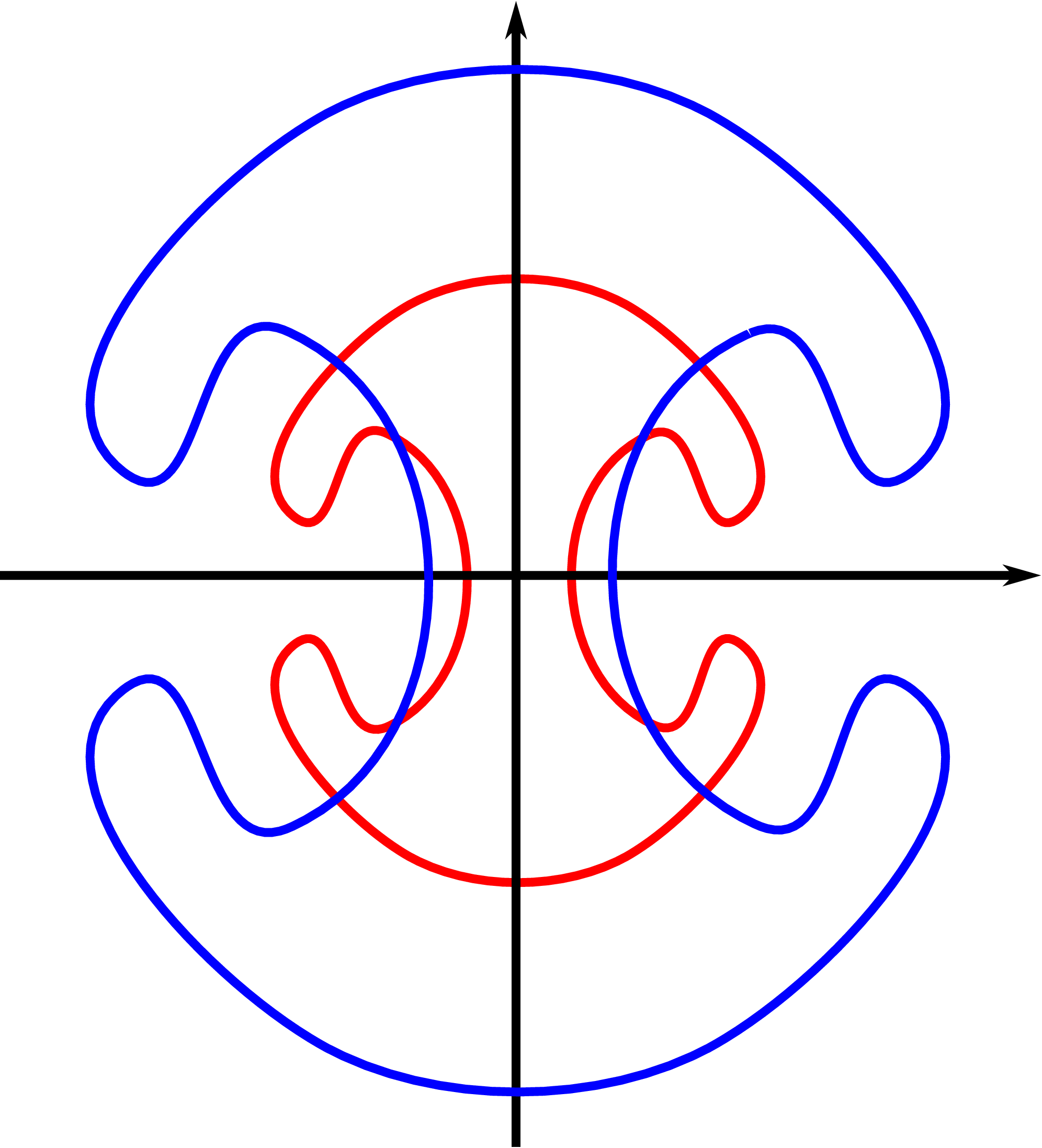}
		\caption{If isochronous manifolds did not satisfy \eqref{star manifolds}, it would not be possible to discretize the state space enabling a region-based STC scheme.}
		\label{bad discr}
	\end{figure}	
	
	\subsection{Inner-Approximations of Isochronous Manifolds and Discretization}
	Deriving the actual isochronous manifolds is generally not possible, as nonlinear systems most often do not admit a closed-form analytical solution. Thus, in order to discretize the state space and generate a region-based STC scheme, we propose a method to construct inner-approximations of isochronous manifolds, as shown in Fig. \ref{discr approx}.
	\begin{definition}[Inner-Approximations of Isochronous Manifolds]
		Consider a system \eqref{etc system} and a triggering function, and let Assumption \ref{assumption 1} hold. A set $\underline{M}_{\tau_i}$ is called inner approximation of an isochronous manifold $M_{\tau_i}$ if and only if for all $x \in \underline{M}_{\tau_i}$:
		\begin{equation}
			\exists \kappa_x\geq 1 \text{ s.t. } \kappa_x x \in M_{\tau_i} \text{ } \mathrm{and} \text{ } \not\exists \lambda_x\in(0,1) \text{ s.t. } \lambda_x x \in M_{\tau_i}.
		\end{equation}
	\end{definition}
	\begin{figure}[!h]
		\centering
		\includegraphics[width=2in]{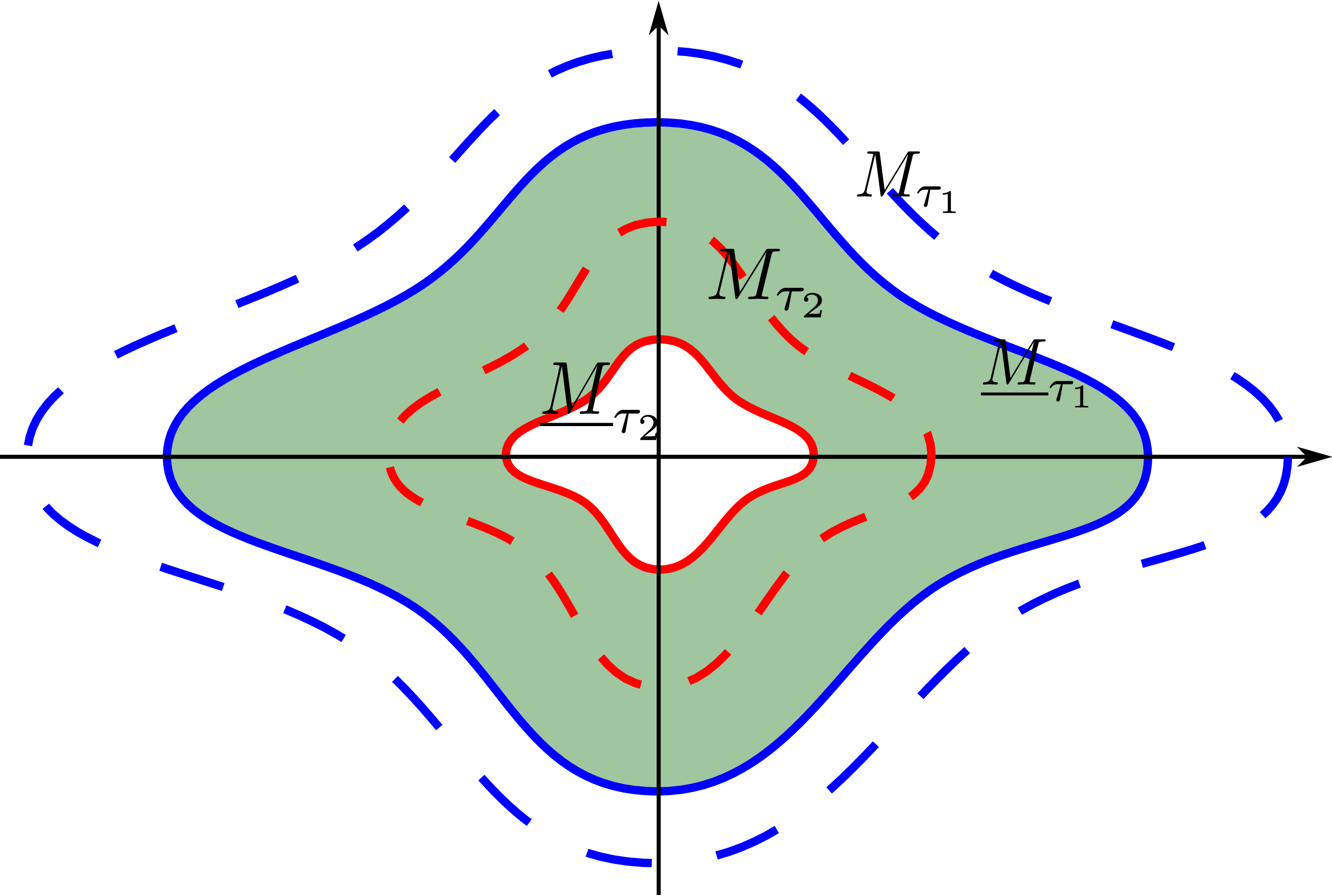}
		\caption{Isochronous manifolds $M_{\tau_i}$ (dashed lines), and their inner-approximations $\underline{M}_{\tau_i}$ (solid lines). The filled region represents $\reg_1$.}
		\label{discr approx}
	\end{figure}
	In other words, an inner-approximation of an isochronous manifold is contained inside the region encompassed by the isochronous manifold. Consider inner-approximations $\underline{M}_{\tau_i}$ of isochronous manifolds ($\tau_1<\tau_2<...$), that satisfy properties \eqref{star manifolds} and \eqref{encirclement}. We consider the regions between sets $\underline{M}_{\tau_i}$:
	\begin{equation} \label{approx region}
		\begin{aligned}
			\reg_i = \{x\in \mathbb{R}^n: \exists \kappa_x\geq1 \text{ s.t. } \kappa_x x \in \underline{M}_{\tau_i} \wedge\exists \lambda_x\in(0,1) \text{ s.t. } \lambda_x x \in \underline{M}_{\tau_{i+1}}\}.
		\end{aligned}
	\end{equation}
	A region $\reg_i$ is the set with its outer boundary being $\underline{M}_{\tau_i}$ and its inner boundary being $\underline{M}_{\tau_{i+1}}$ (see Fig. \ref{discr approx}). For such sets, by \eqref{scaling} we get the following result:
	\begin{corollary}\label{corollary}
		Consider a system \eqref{etc system} and a triggering function $\phi(\cdot)$, and let Assumption \ref{assumption 1} hold. Consider two inner-approximations $\underline{M}_{\tau_{i}}$ and $\underline{M}_{\tau_{i+1}}$ of isochronous manifolds, with $\tau_i\leq\tau_{i+1}$. Assume that $\underline{M}_{\tau_{i}}$ and $\underline{M}_{\tau_{i+1}}$ satisfy \eqref{star manifolds} and \eqref{encirclement}. For the region $\reg_i$ defined in \eqref{approx region}, the following holds:
		\begin{equation*}
			\forall x \in \reg_{i}:  \tau_{i}\leq \tau(x).
		\end{equation*}
	\end{corollary}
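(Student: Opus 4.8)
The plan is to exploit the scaling law \eqref{scaling} together with the defining property of an inner-approximation, composing the two radial rescalings that carry a point of $\reg_i$ outward onto the genuine isochronous manifold $M_{\tau_i}$. Since the claimed inequality only concerns the outer boundary $\underline{M}_{\tau_i}$, the inner boundary $\underline{M}_{\tau_{i+1}}$ plays no role and can be ignored; likewise, the hypotheses that $\underline{M}_{\tau_i}$ and $\underline{M}_{\tau_{i+1}}$ satisfy \eqref{star manifolds} and \eqref{encirclement} are needed only to guarantee that $\reg_i$ is a well-posed, non-self-intersecting region (as in the discussion around Fig.~\ref{bad discr}), not for the bound itself.

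First I would fix an arbitrary $x \in \reg_i$. By the definition \eqref{approx region} of $\reg_i$, there exists $\kappa_x \geq 1$ with $\kappa_x x \in \underline{M}_{\tau_i}$. Then, because $\underline{M}_{\tau_i}$ is an inner-approximation of $M_{\tau_i}$, applying that definition to the point $\kappa_x x \in \underline{M}_{\tau_i}$ yields a further scalar $\kappa' \geq 1$ with $\kappa'(\kappa_x x) \in M_{\tau_i}$. Setting $\mu := \kappa' \kappa_x \geq 1$, the rescaled point $\mu x$ lies on the actual isochronous manifold, so by Definition \ref{manifold definition} we have $\tau(\mu x) = \tau_i$.

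The final step is a single application of the scaling law. From \eqref{scaling}, $\tau(\mu x) = \mu^{-\alpha}\tau(x)$; combining with $\tau(\mu x) = \tau_i$ gives $\tau(x) = \mu^{\alpha}\tau_i$. Since $\mu \geq 1$ and $\alpha \geq 1 > 0$ by Assumption \ref{assumption 1}, the map $\lambda \mapsto \lambda^{\alpha}$ is increasing and $\mu^{\alpha} \geq 1$, whence $\tau(x) \geq \tau_i$. As $x \in \reg_i$ was arbitrary, the claim follows.

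I do not anticipate a genuine obstacle: the argument is essentially bookkeeping about the direction of the two rescalings, both of which use factors $\geq 1$ and therefore can only increase the inter-event time relative to $\tau_i$. The only point requiring mild care is verifying that the composed factor $\mu = \kappa'\kappa_x$ remains $\geq 1$ and that positivity of $\alpha$ renders the scaling monotone, both of which are immediate from $\kappa_x, \kappa' \geq 1$ and $\alpha \geq 1$.
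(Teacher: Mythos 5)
Your proof is correct and follows essentially the same route as the paper's: both compose the factor $\kappa_x\geq 1$ from the definition of $\reg_i$ with the factor $\geq 1$ from the inner-approximation property to land on $M_{\tau_i}$, then apply the scaling law \eqref{scaling} to conclude $\tau(x)=\mu^{\alpha}\tau_i\geq\tau_i$. Your side remark that \eqref{star manifolds} and \eqref{encirclement} are not needed for the inequality itself is also consistent with the paper's argument, which likewise does not invoke them.
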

	\begin{proof}
		For all $x\in\reg_i$, $\exists \kappa_x\geq1$ s.t. $\kappa_xx\in\underline{M}_{\tau_i}$. Thus, $\exists k_x\geq\kappa_x\geq1$ s.t. $k_xx\in M_{\tau_i}$. By \eqref{scaling}, we have $\tau(k_xx)=\tau_i\implies \tau(x)=k_x^{\alpha}\tau_i\geq\tau_i$.
	\end{proof}
	Thus, given inner-approximations of isochronous manifolds, the state space can be discretized into regions $\reg_{i}$, enabling the region-based STC scheme. \emph{This construction requires that inner approximations should also satisfy \eqref{star manifolds} and \eqref{encirclement}. Deriving inner-approximations $\underline{M}_{\tau_{\star}}$ of isochronous manifolds such that they satisfy \eqref{star manifolds} and \eqref{encirclement} constitutes the main theoretical challenge of this work.}
	\begin{remark}\label{number of regions remark}
		As already noted, the number of regions $\reg_i$ equals the number $q$ of predefined times $\tau_i$ (see Section III). Given that $\tau_1$ and $\tau_q$ are fixed, as the number of times $q$ grows, the areas of regions $\reg_i$ become smaller, as the same space is discretized into more regions. Thus, the STC inter-event times $\tau_i$ become more accurate bounds of the actual ETC times $\tau(x)$. However, during the online implementation, the controller in general needs to perform more checks to determine the region of a measured state. Hence, the number $q$ of times $\tau_i$ provides a trade-off between computations and conservativeness.
	\end{remark}
	\begin{remark}\label{first manifold remark}
		Note that $\tau_1$ has to be selected, such that the operating region $\mathcal{B}$ lies completely inside the region delimited by $\underline{M}_{\tau_1}$ (e.g. see Fig. \ref{discr example}). To check this, the approach of \cite{tosample} or an SMT (Satisfiability Modulo Theory) solver (e.g. \cite{dreal}) can be used.
	\end{remark}
	\begin{remark}
		For non-homogeneous systems, there will always exist a neighbourhood around the origin that cannot be contained in any region $\reg_{i}$. However, this set can be made arbitrarily small, by selecting a sufficiently small time $\tau_1$. For a thorough discussion on this, the reader is referred to Appendix D.
	\end{remark}
	
	\section{Approximations of Isochronous Manifolds}
	Here a refined methodology is presented, which generates inner-approximations of isochronous manifolds that satisfy \eqref{star manifolds} and \eqref{encirclement}. First, we show how the method of \cite{anta2012isochrony} actually approximates triggering level sets, and then we refine its core idea to derive approximations of isochronous manifolds. 
	\subsection{Approximations of Triggering Level Sets}
	The method proposed in \cite{anta2012isochrony} is based on bounding the time evolution of the triggering function by another function with linear dynamics: $\psi_1(x,t) \ge \phi(\xi(t;x))$, with $\psi_1(x,0)=\phi(\xi(0;x))<0$ for all $x \in \mathbb{R}^n-\{0\}$. The bound is obtained by constructing a linear system according to a bounding lemma (Lemma V.2 in \cite{anta2012isochrony}). Unfortunately, this lemma is invalid and the function that is obtained does not always bound $\phi(\xi(t;x))$. Specifically, a counterexample is given in \cite{lemma_counterexample} (pp.2 Example 2). However, later in the document we present a slightly adjusted lemma, that is actually valid. Thus, for this subsection we assume that $\psi_1(x,t)$ is an upper bound of $\phi(\xi(t;x))$.
	
	Since $\psi_1(x,t) \ge \phi(\xi(t;x))$ and $\psi_1(x,0)<0$, if we define:
	\begin{equation*}
		\tau^{\downarrow}(x) = \inf\{t>0 : \psi_1(x,t)=0\},
	\end{equation*}
	then it is guaranteed that $\phi(\xi(x;t))\leq 0, \quad \forall t\in [0,\tau^{\downarrow}(x)]$.
	Hence, the first zero-crossing of $\psi_1(x,t)$ for a given $x$ happens before the first zero-crossing of $\phi(\xi(t;x))$, i.e. the inter-event time of $x$ is lower bounded by $\tau^{\downarrow}(x)$: $\tau(x) \geq \tau^{\downarrow}(x)$.
	
	In \cite{anta2012isochrony}, under the misconception that isochronous manifolds and triggering level sets coincide, it is argued that to approximate an isochronous manifold, it suffices to approximate the set $L_{\tau_{\star}}:=\{x\in \mathbb{R}^n: \phi(\xi(\tau_{\star};x))=0\}$, i.e. a triggering level set. Thus, the upper bound $\psi_1(x,t)$ of $\phi(\xi(t;x))$ is used to derive the following approximation: $\underline{L}_{\tau_{\star}}:=\{x\in \mathbb{R}^n: \psi_1(x,\tau_{\star})=0\}$. 
	However, as we have already pointed out for the triggering function, $\psi_1(x,t)$ might also have multiple zero-crossings for a given $x\in\mathbb{R}^n$. Thus, the equation $\psi_1(x,t)=0$ does not only capture the inter-event times of points $x$, but possibly also more zero-crossings of $\phi(t;x)$. Thus, we can say that the set $\underline{L}_{\tau_{\star}}$ is an approximation of the triggering level set $L_{\tau_{\star}}$, and not of the isochronous manifold $M_{\tau_{\star}}$. Furthermore, observe that $\psi_1(x,t)$ does not a-priori satisfy the time scaling property \eqref{trig_cond_scaling}. Consequently, there is no formal guarantee that the sets $\underline{L}_{\tau_{\star}}$ satisfy \eqref{star manifolds} (see Remark \ref{homogeneity remark}). In other words, the sets $\underline{L}_{\tau_{\star}}$ might be intersected by some homogeneous rays more than once, or they may not be intersected at all.
	\begin{remark}\label{stc of anta remark}
		In \cite{anta2012isochrony}, given a fixed time $\tau_{\star}$, the equation
		\begin{equation}\label{find lambda}
			\psi_1(\tfrac{x_0}{\lambda},\tau_{\star})=0
		\end{equation}
		is solved w.r.t. $\lambda$, in order to determine the STC inter-event time of the measured state $x_0$ as: $\tau^{\downarrow}(x_0)=\lambda^{-\alpha}\tau_{\star}$. Note that \eqref{find lambda} finds intersections $\tfrac{x_0}{\lambda}$ of $\underline{L}_{\tau_{\star}}$ with the ray passing through $x_0$. Hence, the above observations imply that \eqref{find lambda} may not have any real solution, or may admit some solutions $\lambda$ such that $\tau^{\downarrow}(x_0)=\lambda^{-\alpha}\tau_{\star}>\tau(x)$, hindering stability.
	\end{remark}
	
	\subsection{Inner-Approximations of Isochronous Manifolds}
	Although, the method of \cite{anta2012isochrony} generates approximations of triggering level sets, which do not satisfy \eqref{star manifolds}, we employ the idea of upper-bounding the triggering function, and we impose additional properties to the upper bound, such that the obtained sets approximate isochronous manifolds and satisfy \eqref{star manifolds} and \eqref{encirclement}. Remarks \ref{one zero-crossing remark} and \ref{homogeneity remark} state that: 1) isochronous manifolds coincide with triggering level sets, if $\phi(\cdot)$ has only one zero-crossing w.r.t. $t$, and 2) $\phi(\cdot)$ satisfying \eqref{trig_cond_scaling} implies that isochronous manifolds satisfy \eqref{star manifolds} and \eqref{encirclement}. Intuitively, we could construct a function $\mu(x,t)$ that satisfies the same properties and its zero-crossing happens before the one of $\phi(\cdot)$, and use the level sets $\underline{M}_{\tau_{\star}} = \{x\in\mathbb{R}^n: \mu(x,\tau_{\star})=0\}$ as inner approximations of isochronous manifolds that satisfy \eqref{star manifolds} and \eqref{encirclement}. The above are summarized in the following theorem:
	\begin{theorem} \label{theorem 1}
		Consider an ETC system \eqref{etc system}, a triggering function $\phi(\cdot)$, and let Assumption \ref{assumption 1} hold. Let $\mu:\mathbb{R}^n\times\mathbb{R}^+\rightarrow\mathbb{R}$ be a function that satisfies:
		\begin{subequations}
			\begin{align}
				&\mu(x,0)<0, \quad \forall x \in \mathbb{R}^n-\{0\}, \label{bound init cond}\\
				&\mu(x,t) \geq \phi(\xi(t;x)), \quad \forall t \in [0,\tau(x)] \text{ }\mathrm{and} \text{ }\forall x \in \mathbb{R}^n-\{0\}, \label{bound time validity}\\
				&\mu(\lambda x,t) = \lambda^{\theta+1}\mu(x,\lambda^{\alpha}t), \quad \forall t,\lambda>0 \text{ }\mathrm{and}\text{ } \forall x \in \mathbb{R}^n-\{0\}, \label{scaling of bound}\\
				&\forall x \in \mathbb{R}^n-\{0\}: \quad \exists!\tau_x  \text{ such that } \mu(x,\tau_x)=0. \label{one-zero-crossing of bound}
			\end{align}
			\label{bound requirements}
		\end{subequations}
		The sets $\underline{M}_{\tau_{\star}} = \{x\in\mathbb{R}^n: \mu(x,\tau_{\star})=0\}$ are inner-approximations of isochronous manifolds $M_{\tau_{\star}}$ and satisfy \eqref{star manifolds} and \eqref{encirclement}.
	\end{theorem}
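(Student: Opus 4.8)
The plan is to mirror the proofs of Propositions \ref{star manifolds proposition} and \ref{encircle prop}, replacing the roles of the scaling law \eqref{scaling} and the time-scaling property \eqref{trig_cond_scaling} of $\phi$ by the prescribed properties \eqref{scaling of bound} and \eqref{one-zero-crossing of bound} of $\mu$, and then using the bound \eqref{bound time validity} to pin down the inner-approximation property. I would treat the three claims separately: first that each $\underline{M}_{\tau_\star}$ satisfies \eqref{star manifolds}, then that the family satisfies \eqref{encirclement}, and finally that each $\underline{M}_{\tau_\star}$ is an inner-approximation of $M_{\tau_\star}$.

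For \eqref{star manifolds}, I would fix $x \neq 0$ and $\tau_\star > 0$ and use \eqref{scaling of bound} to write $\mu(\lambda x, \tau_\star) = \lambda^{\theta+1}\mu(x, \lambda^\alpha \tau_\star)$; since $\lambda^{\theta+1} > 0$, the point $\lambda x$ lies in $\underline{M}_{\tau_\star}$ if and only if $\mu(x, \lambda^\alpha \tau_\star) = 0$. By \eqref{one-zero-crossing of bound}, $t \mapsto \mu(x,t)$ has a single zero $\tau_x$, which is positive because $\mu(x,0) < 0$ by \eqref{bound init cond}, so the condition becomes $\lambda^\alpha \tau_\star = \tau_x$, yielding the unique $\lambda_x = (\tau_x/\tau_\star)^{1/\alpha} > 0$ (using $\alpha \geq 1$). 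This is exactly \eqref{star manifolds}. The encirclement property \eqref{encirclement} then follows just as in Proposition \ref{encircle prop}: for $x \in \underline{M}_{\tau_i}$ we have $\tau_x = \tau_i$ by uniqueness, and the same computation with $\tau_\star = \tau_{i+1}$ gives the unique scaling $\lambda_x = (\tau_i/\tau_{i+1})^{1/\alpha}$, which lies in $(0,1)$ because $\tau_i < \tau_{i+1}$; the uniqueness of the intersection rules out any $\kappa_x \geq 1$.

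The crux is the inner-approximation claim, where the bound \eqref{bound time validity} and the distinction between the first zero $\tau(x)$ of $\phi$ and the unique zero of $\mu$ enter. Taking $x \in \underline{M}_{\tau_\star}$, so that the unique zero of $\mu(x,\cdot)$ is $\tau_\star$, I would first establish $\tau_\star \leq \tau(x)$. Evaluating \eqref{bound time validity} at the first zero-crossing $t = \tau(x)$ gives $\mu(x, \tau(x)) \geq \phi(\xi(\tau(x); x)) = 0$. If instead $\tau_\star > \tau(x)$, then $\mu(x,\cdot)$ would have no zero on $[0, \tau(x)]$; combined with $\mu(x,0) < 0$ and continuity of $\mu(x,\cdot)$, this forces $\mu(x, \tau(x)) < 0$, contradicting the previous inequality. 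Hence $\tau_\star \leq \tau(x)$. Finally I would invoke the scaling law \eqref{scaling} for the true inter-event times: setting $\kappa_x = (\tau(x)/\tau_\star)^{1/\alpha} \geq 1$ gives $\tau(\kappa_x x) = \kappa_x^{-\alpha}\tau(x) = \tau_\star$, i.e. $\kappa_x x \in M_{\tau_\star}$, while Proposition \ref{star manifolds proposition} guarantees this $\kappa_x$ is the only positive scaling landing on $M_{\tau_\star}$, so no $\lambda_x \in (0,1)$ can, completing the inner-approximation property.

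The step I expect to be most delicate is proving $\tau_\star \leq \tau(x)$: it is the only place that genuinely uses \eqref{bound time validity}, and it requires care because that bound is asserted only on $[0,\tau(x)]$ and because $\mu$ has a single zero whereas $\phi$ may cross zero several times. The argument leans on continuity of $t \mapsto \mu(x,t)$, which holds by the construction of $\mu$, to convert ``no zero on $[0,\tau(x)]$'' together with $\mu(x,0)<0$ into a strict negative sign, so that the contradiction with $\mu(x,\tau(x)) \geq 0$ closes. Everything else is a direct transcription of the homogeneity-and-scaling arguments already used for the exact isochronous manifolds.
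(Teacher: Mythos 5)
Your proposal is correct and follows essentially the same route as the paper: properties \eqref{scaling of bound} and \eqref{one-zero-crossing of bound} transfer the arguments of Propositions \ref{star manifolds proposition} and \ref{encircle prop} to the sets $\underline{M}_{\tau_{\star}}$, while \eqref{bound init cond}, \eqref{bound time validity} and the intermediate-value argument give that the unique zero of $\mu(x,\cdot)$ precedes $\tau(x)$, from which the scaling law \eqref{scaling} yields the inner-approximation property. Your explicit computation of $\kappa_x=(\tau(x)/\tau_{\star})^{1/\alpha}$ is just a more concrete rendering of the paper's monotonicity-along-rays argument, and your flagged reliance on continuity of $t\mapsto\mu(x,t)$ is equally implicit in the paper's own proof.
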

	\begin{proof}
		See Appendix.
	\end{proof}
	\begin{remark}\label{bound time validity remark}
		It is crucial that inequality \eqref{bound time validity} extends at least until $\tau(x)$, in order for $\mu(x,t)$ to capture the actual inter-event time, i.e. for the minimum time satisfying $\mu(x,t)=0$ to lower bound the minimum time satisfying $\phi(\xi(t;x))=0$.
	\end{remark}
	
	\subsection{Constructing the Upper Bound of the Triggering Function}
	In this subsection, we construct a valid bounding lemma and we employ it in order to derive an upper bound $\mu(x,t)$ of the triggering function $\phi(\xi(t;x))$, such that it satisfies \eqref{bound requirements}.
	\begin{lemma} 
		Consider a system of differential equations $\dot{\xi}(t) = F(\xi(t))$, where $\xi:\mathbb{R}^+\rightarrow\mathbb{R}^n$, $F:\mathbb{R}^n \rightarrow \mathbb{R}^n$, a function $\phi:\mathbb{R}^n \rightarrow \mathbb{R}$ and a set $\Omega_d = \{x \in \mathbb{R}^n: |x|< d \}$. For every set of coefficients $\delta_{0}, \delta_{1}, ..., \delta_p \in \mathbb{R}^+$ satisfying:
		\begin{equation}
			\mathcal{L}_F^p\phi(z) \le \sum_{i=0}^{p-1}\delta_i\mathcal{L}_F^i\phi(z) + \delta_p, \quad \forall z \in \Omega_d,
			\label{delta ineq}
		\end{equation}
		the following inequality holds for all $\xi_0 \in \Omega_d$:
		\begin{equation*}
			\phi(\xi(t;\xi_0)) \le \psi_1(y(\xi_0),t) \quad \forall t \in [0,\tau_{\xi_0}),
		\end{equation*}
		where $\tau_{\xi_0}$ is defined as: 
		\begin{equation}
			\tau_{\xi_0}=\sup\{\tau>0:\xi(t;\xi_0) \in \Omega_d, \quad \forall t \in [0,\tau)\}
			\label{tau xi0}
		\end{equation}
		and $\psi_1(y(\xi_0),t)$ is the first component of the solution of the following linear dynamical system:
		\begin{equation}
			\dot{\psi}= 
			\begin{bmatrix}
				0 &1 &0 &\dots &0 &0\\
				0 &0 &1 &\dots &0 &0\\
				\vdots &\vdots &\qquad &\ddots &\vdots &\vdots\\
				0 &0 &0 &\dots &1 &0\\
				\delta_0 &\delta_1 &\delta_2 &\dots &\delta_{p-1} &1\\
				0 &0 &0 &\dots &0  &0\\
			\end{bmatrix} \psi = A\psi,
			\label{linear system}
		\end{equation}
		with initial condition: 
		\begin{equation*}
			y(\xi_0)=\begin{bmatrix}\phi(\xi_0) &\mathcal{L}_F\phi(\xi_0) &\dots &\mathcal{L}_F^{p-1}\phi(\xi_0) &\delta_p\end{bmatrix}^{\top}.
		\end{equation*}
		\label{our lemma} 
	\end{lemma}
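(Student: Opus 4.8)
The plan is to recast the statement as the \emph{positivity} of a linear companion system driven by a nonnegative source term, after which a single application of variation of constants closes the argument; the result is essentially the higher-order comparison principle of \cite{comparison1971} specialized to the companion form \eqref{linear system}. First I would move to Lie-derivative coordinates along the trajectory. Fix $\xi_0\in\Omega_d$ and, on $[0,\tau_{\xi_0})$, set $z_i(t):=\mathcal{L}_F^i\phi(\xi(t;\xi_0))$ for $i=0,\dots,p-1$. Differentiating along the flow of $F$ gives $\dot z_i(t)=\mathcal{L}_F^{i+1}\phi(\xi(t;\xi_0))=z_{i+1}(t)$ for $i\le p-2$, and $\dot z_{p-1}(t)=\mathcal{L}_F^p\phi(\xi(t;\xi_0))$. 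By the definition \eqref{tau xi0} of $\tau_{\xi_0}$ the trajectory satisfies $\xi(t;\xi_0)\in\Omega_d$ throughout $[0,\tau_{\xi_0})$, so hypothesis \eqref{delta ineq} may be evaluated pointwise along it, yielding $\dot z_{p-1}(t)\le\sum_{i=0}^{p-1}\delta_i z_i(t)+\delta_p$. Writing $z=(z_0,\dots,z_{p-1})^{\top}$ and $b=(0,\dots,0,\delta_p)^{\top}$, this reads $\dot z\preceq Bz+b$, where $B$ is the $p\times p$ companion matrix with ones on the superdiagonal and last row $(\delta_0,\dots,\delta_{p-1})$.

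Next I would identify the comparison system hidden inside \eqref{linear system}. Its last row gives $\dot\psi_{p+1}=0$, so $\psi_{p+1}(t)\equiv\delta_p$ by the initial condition; substituting this constant into the $p$-th row collapses \eqref{linear system} to the inhomogeneous companion system $\dot w=Bw+b$ for $w:=(\psi_1,\dots,\psi_p)^{\top}$, which I index as $w_0,\dots,w_{p-1}$ so that $w_0=\psi_1$. The initial condition $y(\xi_0)$ gives exactly $w(0)=(\phi(\xi_0),\mathcal{L}_F\phi(\xi_0),\dots,\mathcal{L}_F^{p-1}\phi(\xi_0))^{\top}=z(0)$. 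Thus $w$ and $z$ share the same companion dynamics, the same forcing $b$, and the same initial data, and the quantity to be bounded is precisely $\psi_1=w_0$ against $z_0=\phi(\xi(t;\xi_0))$.

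The core step is the comparison itself. Put $\eta:=w-z$. The matched initial conditions give $\eta(0)=0$, and subtracting the two dynamics yields $\dot\eta=B\eta+g(t)$, where the source $g(t):=\bigl(\sum_{i=0}^{p-1}\delta_i z_i(t)+\delta_p-\dot z_{p-1}(t)\bigr)$ sits in the last coordinate only and satisfies $0\preceq g(t)$ precisely because it is the slack in \eqref{delta ineq}. Variation of constants then gives $\eta(t)=\int_0^t e^{B(t-s)}g(s)\,ds$. The decisive observation is that $B$ is a \emph{Metzler} matrix: all its off-diagonal entries---the superdiagonal ones and the entries $\delta_0,\dots,\delta_{p-2}$ of its last row---are nonnegative, since $\delta_i\in\mathbb{R}^+$. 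Hence $e^{Bt}$ is entrywise nonnegative for every $t\ge0$, and as $0\preceq g(s)$ the integrand is entrywise nonnegative, so $0\preceq\eta(t)$. Reading off the first component gives $\psi_1(y(\xi_0),t)=w_0(t)\ge z_0(t)=\phi(\xi(t;\xi_0))$ for all $t\in[0,\tau_{\xi_0})$, as claimed.

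I expect the main obstacle to be justifying the entrywise positivity of $e^{Bt}$, i.e.\ that a companion matrix with nonnegative entries generates a monotone (cooperative) flow carrying nonnegative data to nonnegative solutions. This is exactly where the sign restriction $\delta_i\in\mathbb{R}^+$ in \eqref{delta ineq} is indispensable: without it $B$ need not be Metzler, the componentwise comparison can fail, and the bound $\psi_1\ge\phi$ collapses. I suspect this is precisely the gap in Lemma V.2 of \cite{anta2012isochrony} that the corrected hypothesis \eqref{delta ineq}, together with the auxiliary variable absorbing the constant term $\delta_p$, is designed to repair.
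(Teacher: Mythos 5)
Your proof is correct, but it takes a genuinely different route from the paper. The paper invokes the higher-order comparison lemma of \cite{comparison1971} as a black box: it identifies $v\equiv\phi$, $g\equiv\sum_i\delta_iv^{(i)}+\delta_p$, checks that $g$ is of type $W^*$ (which is where $\delta_i\geq0$ enters), and checks that the solution of the auxiliary linear system is a right maximal solution on $[0,+\infty)$ by appealing to monotonicity of cooperative systems; the conclusion then follows from that general theorem. You instead re-derive the needed special case from scratch: passing to the Lie-derivative coordinates $z_i=\mathcal{L}_F^i\phi(\xi(t;\xi_0))$, collapsing \eqref{linear system} to the inhomogeneous companion system via $\psi_{p+1}\equiv\delta_p$, and closing the comparison by variation of constants for $\eta=w-z$ together with entrywise nonnegativity of $e^{Bt}$ for the Metzler matrix $B$. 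The underlying mechanism is identical in both arguments --- nonnegativity of the $\delta_i$ makes the comparison dynamics cooperative, which is exactly the hypothesis missing from Lemma V.2 of \cite{anta2012isochrony} --- but your version is self-contained and more elementary, exploiting the linearity of the comparison system, whereas the paper's citation-based argument would survive replacing the linear bound by any nonlinear $g$ of type $W^*$. The only points you should make explicit are the smoothness needed for $z_{p-1}$ to be differentiable (i.e.\ $\phi\in\mathcal{C}^p$, $F\in\mathcal{C}^{p-1}$ on $\Omega_d$, which the paper also assumes when verifying the hypotheses of the comparison lemma) and a one-line justification that $e^{Bt}\succeq0$ for Metzler $B$ (e.g.\ $e^{Bt}=e^{-ct}e^{(B+cI)t}$ with $c$ large enough that $B+cI$ is entrywise nonnegative).
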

	\begin{proof}
		See Appendix.
	\end{proof}
	\begin{remark}\label{difference between lemmas}
		The main difference between Lemma \ref{our lemma} and the bounding lemma in \cite{anta2012isochrony} is that in Lemma \ref{our lemma} the coefficients $\delta_i$ are forced to be non-negative. We also include a proof, employing a higher-order comparison lemma, since the comparison lemma arguments used in the proof of \cite{anta2012isochrony} are invalid.
	\end{remark}
	
	Let us define the open ball:
	\begin{equation}
		\Omega_d:=\{x\in\mathbb{R}^{2n}: |x|<d\}. \label{omega xi}
	\end{equation} Consider the following feasibility problem:
	\begin{feasibility problem} \label{feasibility problem}
		Consider a system \eqref{etc system} and a triggering function $\phi(\cdot)$ and let Assumption \ref{assumption 1} hold. Find $\delta_0,\dots,\delta_p\in\real$ such that:
		\begin{subequations}
			\begin{align}	
				&\mathcal{L}_F^p\phi(z) \le \sum_{i=0}^{p-1}\delta_i\mathcal{L}_F^i\phi(z) + \delta_p, \quad \forall z\in \Omega_d, \label{delta ineq3}\\
				&\delta_0\phi\Big((x,0)\Big)+\delta_p \geq \epsilon > 0, \quad \forall x \in \setz, \label{delta init cond}\\
				&\delta_i \geq 0, \quad i=0,1,\dots,p, \label{delta positivity}
			\end{align}
			\label{delta feasibility}
		\end{subequations}
		where $\epsilon$ is an arbitrary predefined positive constant, $d$ is such that $\Xi\subset \Omega_d$, and $\setz$, $\Xi$ and $\Omega_d$ are given by Assumption \ref{assumption 1} and \eqref{omega xi} respectively.
	\end{feasibility problem}
	The feasible solutions of \eqref{delta feasibility} belong in a subset of the feasible solutions of Lemma \ref{our lemma}, i.e. the solutions of \eqref{delta feasibility} determine upper bounds of the triggering function. Moreover, such $\delta_i$ always exist, since to satisfy \eqref{delta feasibility} it suffices to pick $\delta_p \geq \max\{\epsilon,\sup\limits_{z \in\Omega_d}\mathcal{L}_F^p\phi(z)\}$ and $\delta_i=0$ for $i=0, \dots, p-1$. The following theorem shows how to employ solutions of Problem \ref{feasibility problem}, in order to construct upper bounds that satisfy \eqref{bound requirements}.
	\begin{theorem}\label{main theorem}
		Consider a system \eqref{etc system}, a triggering function $\phi(\cdot)$, and coefficients $\delta_0,\dots,\delta_p$ solving Problem \ref{feasibility problem}. Let Assumption \ref{assumption 1} hold. Let $D=\{x\in\mathbb{R}^n: |x|=r\}$, with $r>0$ and $D \subset \setz$. Define the following function for all $x\in\mathbb{R}^n-\{0\}$:
		\begin{equation} \label{mu}
			\mu(x,t) := C (\tfrac{|x|}{r})^{\theta+1} \boldsymbol{e}^{A(\frac{|x|}{r})^{\alpha}t} 
			\begin{bmatrix}
				\phi\Big((r\tfrac{x}{|x|},0)\Big)\\
				\max\bigg( \mathcal{L}_f\phi\Big((r\tfrac{x}{|x|},0)\Big),0 \bigg)\\
				\vdots\\
				\max\bigg( \mathcal{L}_f^{p-1}\phi\Big((r\tfrac{x}{|x|},0)\Big),0 \bigg)\\
				\delta_p
			\end{bmatrix}
		\end{equation}
		where $A$ is as in \eqref{linear system}, $C=[1\text{ } 0\text{ } \dots\text{ } 0]$, and $\alpha$ and $\theta$ are the degrees of homogeneity of the system and the triggering function, respectively. The function $\mu(x,t)$ satisfies \eqref{bound requirements}.
	\end{theorem}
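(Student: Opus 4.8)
The plan is to verify the four requirements \eqref{bound requirements} in turn, exploiting that $\mu(x,t)$ is obtained by freezing the linear dynamics \eqref{linear system} on the radial projection of $x$ onto $D$. Throughout I would write $x=\lambda_x\bar{x}$ with $\bar{x}:=r\tfrac{x}{|x|}\in D$ and $\lambda_x:=\tfrac{|x|}{r}>0$, and let $\tilde{y}(\bar{x})$ denote the initial vector in \eqref{mu}: its first entry is $\phi((\bar{x},0))$ and its last entry is $\delta_p$ (both untouched), while its middle entries are $\max(\mathcal{L}_F^i\phi((\bar{x},0)),0)$. Then $\mu(x,t)=\lambda_x^{\theta+1}Ce^{A\lambda_x^{\alpha}t}\tilde{y}(\bar{x})$, so $\mu(\bar{x},t)$ is exactly the first component $\psi_1(\tilde{y}(\bar{x}),t)$ of the solution of \eqref{linear system}. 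The two algebraic requirements are then immediate. For \eqref{bound init cond} I set $t=0$ and use $Ce^{0}\tilde{y}(\bar{x})=\phi((\bar{x},0))<0$ (Assumption \ref{assumption 1}) together with $\lambda_x^{\theta+1}>0$. For \eqref{scaling of bound} the key observation is that the projection is scale invariant, $\overline{\lambda x}=\bar{x}$, hence $\tilde{y}(\overline{\lambda x})=\tilde{y}(\bar{x})$; substituting $\lambda_{\lambda x}=\lambda\lambda_x$ into the closed form and factoring out $\lambda^{\theta+1}$ reproduces $\lambda^{\theta+1}\mu(x,\lambda^{\alpha}t)$.

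Next I would establish the bound \eqref{bound time validity}, first on the sphere and then by scaling. Since $\bar{x}\in D\subset\setz$ and $\phi((\bar{x},0))<0$, the containment in Assumption \ref{assumption 1} (applied at $t=0$) gives $(\bar{x},0)\in\Xi\subset\Omega_d$, so Lemma \ref{our lemma} applies to $\xi_0=(\bar{x},0)$ and yields $\phi(\xi(t;\bar{x}))\le\psi_1(y(\bar{x}),t)$ on $[0,\tau_{\xi_0})$, where $y(\bar{x})$ carries the \emph{exact} Lie derivatives. I would then argue $\tau_{\xi_0}>\tau(\bar{x})$: on $[0,\tau(\bar{x})]$ one has $\phi\le0$, hence $\xi(t;\bar{x})\in\Xi\subset\Omega_d$, and since $\Omega_d$ is open the trajectory remains inside slightly beyond $\tau(\bar{x})$, so the bound is valid on the closed interval $[0,\tau(\bar{x})]$. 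The passage from $y(\bar{x})$ to $\tilde{y}(\bar{x})$ is where positivity enters: because all $\delta_i\ge0$ by \eqref{delta positivity}, $A$ is entrywise nonnegative, so $e^{At}\ge0$ and $Ce^{At}$ is a nonnegative row vector; since $\tilde{y}(\bar{x})\ge y(\bar{x})$ componentwise (middle entries replaced by their nonnegative parts, others unchanged), we get $\psi_1(y(\bar{x}),t)\le\psi_1(\tilde{y}(\bar{x}),t)=\mu(\bar{x},t)$, hence $\phi(\xi(t;\bar{x}))\le\mu(\bar{x},t)$ on $[0,\tau(\bar{x})]$. Finally I globalize: for $x=\lambda_x\bar{x}$ and $t\in[0,\tau(x)]$, the time-scaling property \eqref{trig_cond_scaling} gives $\phi(\xi(t;x))=\lambda_x^{\theta+1}\phi(\xi(\lambda_x^{\alpha}t;\bar{x}))$, the scaling law \eqref{scaling} gives $\lambda_x^{\alpha}t\in[0,\tau(\bar{x})]$, and \eqref{scaling of bound} gives $\mu(x,t)=\lambda_x^{\theta+1}\mu(\bar{x},\lambda_x^{\alpha}t)$; combining these with $\lambda_x^{\theta+1}>0$ yields \eqref{bound time validity}.

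The single zero-crossing \eqref{one-zero-crossing of bound} is the crux and the step I expect to be hardest. By \eqref{scaling of bound} it suffices to show that $t\mapsto\mu(\bar{x},t)=\psi_1(\tilde{y}(\bar{x}),t)$ has a unique positive zero, since zeros of $\mu(x,\cdot)$ correspond bijectively to those of $\mu(\bar{x},\cdot)$ through $t\mapsto\lambda_x^{\alpha}t$. Writing $\psi(t)=e^{At}\tilde{y}(\bar{x})$ I would differentiate, $\dot{\psi}(t)=e^{At}(A\tilde{y}(\bar{x}))$, and compute $A\tilde{y}(\bar{x})$ entrywise: its first $p-1$ entries are the nonnegative numbers $\max(\mathcal{L}_F^i\phi,0)$, its last entry is $0$, and its $p$-th entry equals $\delta_0\phi((\bar{x},0))+\sum_{i\ge1}\delta_i\max(\mathcal{L}_F^i\phi,0)+\delta_p\ge\delta_0\phi((\bar{x},0))+\delta_p\ge\epsilon>0$ by \eqref{delta init cond} (using $\bar{x}\in\setz$). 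Thus $A\tilde{y}(\bar{x})\ge0$ with strictly positive $p$-th entry; multiplying by the nonnegative matrix $e^{At}$ gives $\dot{\psi}(t)\ge0$, and in particular $\dot{\psi}_1(t)\ge(e^{At})_{1p}(A\tilde{y}(\bar{x}))_p\ge\tfrac{t^{p-1}}{(p-1)!}\epsilon>0$ for $t>0$, where the lower bound on $(e^{At})_{1p}$ follows from the shift chain $1\to2\to\cdots\to p$ in $A$ and the absence of sign cancellations. Hence $\psi_1$ is strictly increasing on $(0,\infty)$ and, integrating this bound, diverges to $+\infty$; since $\psi_1(0)=\phi((\bar{x},0))<0$, it has exactly one positive zero, which transfers back to $\mu(x,\cdot)$ through the scaling.

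The main obstacle, as flagged, is \eqref{one-zero-crossing of bound}: the construction hinges on recognizing that \eqref{delta positivity} makes $A$ (and all its powers) nonnegative so that $e^{At}\ge0$ with no cancellations, that the $\max(\cdot,0)$ truncation in \eqref{mu} forces $A\tilde{y}\ge0$, and that \eqref{delta init cond} precisely guarantees the strictly positive $p$-th entry needed for $\dot{\psi}_1>0$. The remaining care is bookkeeping: keeping the first entry of $\tilde{y}$ untruncated so that \eqref{bound init cond} survives while the truncation of the middle entries remains legitimate for the upper bound, together with the openness argument extending the bound of Lemma \ref{our lemma} to the closed interval $[0,\tau(\bar{x})]$.
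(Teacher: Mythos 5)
Your proposal is correct and follows essentially the same route as the paper: restrict to the sphere $D\subset\setz$, apply Lemma \ref{our lemma} there, pass from the exact Lie-derivative initial condition to the truncated one via nonnegativity (the paper invokes Proposition \ref{monotone systems proposition} where you use $e^{At}\geq 0$ entrywise — the same fact), extend to all of $\mathbb{R}^n-\{0\}$ by the scaling identity, and obtain uniqueness of the zero crossing from strict monotonicity of $t\mapsto\mu(\bar{x},t)$ driven by \eqref{delta init cond} and \eqref{delta positivity}. The only cosmetic deviations are that you prove existence of the zero by integrating the derivative bound to get divergence, whereas the paper uses $\mu(x,\tau(x))\geq\phi(\xi(\tau(x);x))=0$, and your explicit lower bound $(e^{At})_{1p}\geq t^{p-1}/(p-1)!$ makes the strict-increase step somewhat more self-contained than the paper's derivative-at-zero argument.
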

	\begin{proof}
		See Appendix.
	\end{proof}
	Thus, according to Theorem \ref{theorem 1}, the sets $\underline{M}_{\tau_{\star}} = \{x\in\mathbb{R}^n:\mu(x,\tau_{\star})=0\}$ are inner-approximations of the actual isochronous manifolds of the system and satisfy \eqref{star manifolds} and \eqref{encirclement}. The fact that $\mu(x,t)$ satisfies \eqref{bound requirements} directly implies that the region $\reg_i$ between two approximations $\underline{M}_{\tau_i}$ and $\underline{M}_{\tau_{i+1}}$ ($\tau_i<\tau_{i+1}$) can be defined as:
	\begin{equation} \label{approx regions}
		\reg_i := \{x\in\mathbb{R}^n: \mu(x,\tau_i)\leq0 \wedge \mu(x,\tau_{i+1})>0\}.
	\end{equation}
	To determine online to which region does the measured state belong, the controller checks inequalities like the ones in \eqref{approx regions}.
	
	Let us explain the importance of $\setz$, $\Xi$ from Assumption \ref{assumption 1}. By solving Problem \ref{feasibility problem}, an upper bound $\psi(\xi_0,t)$ is determined according to Lemma V.2 that bounds $\phi(\xi(t;\xi_0))$ as follows:
	\begin{equation*}
		\psi(\xi_0,t) \geq \phi(\xi(t;\xi_0)), \quad \forall \xi_0\in\Omega_d \text{ and } \forall t\in[0,\tau_{\xi_0}),
	\end{equation*} 
	where $\tau_{\xi_0}$ is the time when the trajectory $\xi(t;\xi_0)$ leaves $\Omega_d$ (see \eqref{tau xi0}).
	\begin{figure}[!h]
		\centering
		\includegraphics[width=1.3in]{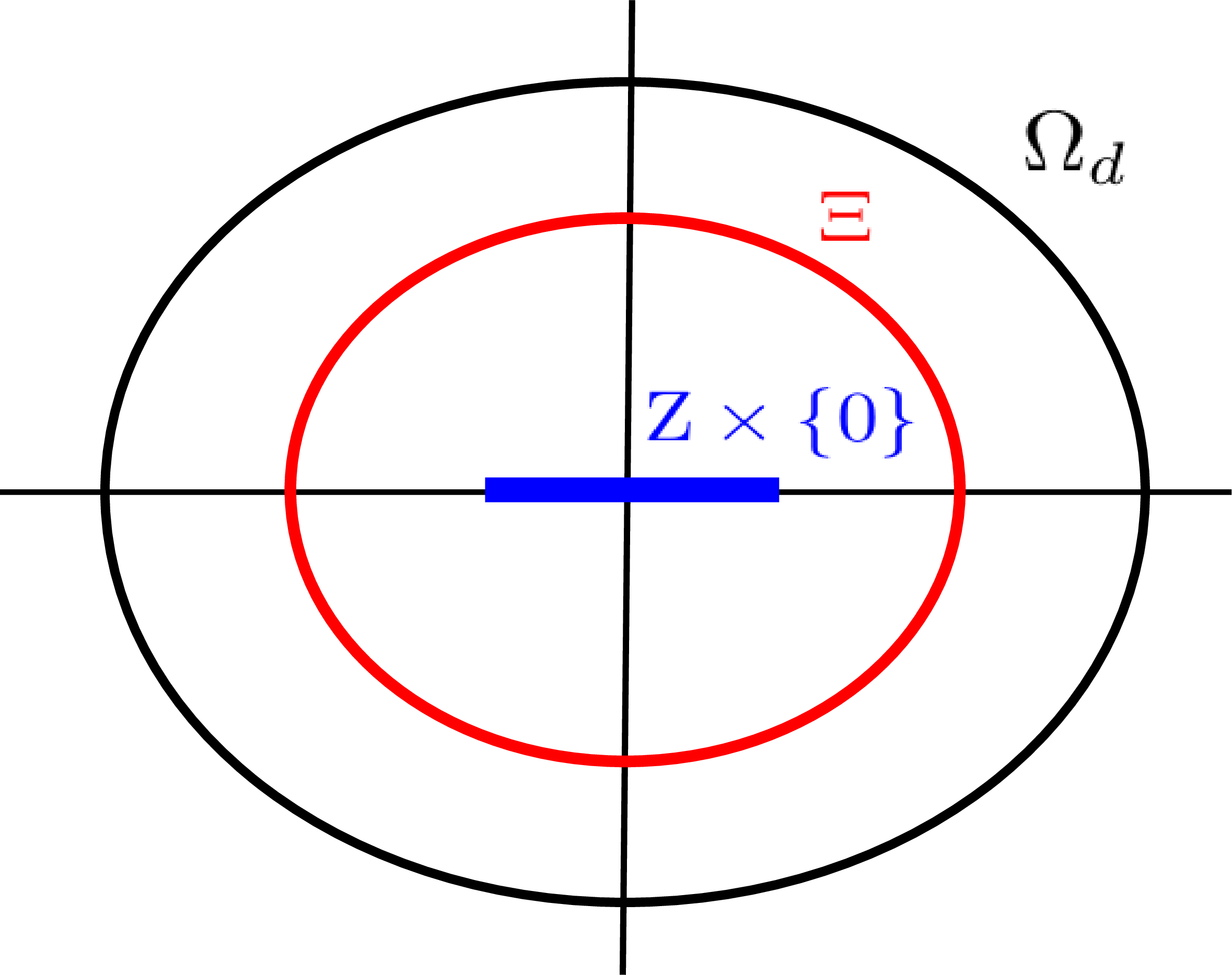}
		\caption{The sets $\mathrm{Z}\times\{0\}\subset\Xi\subset\Omega_d$.}
		\label{z_xi_omega}
	\end{figure}
	What is needed is to bound $\phi(\xi(t;\xi_0))$ at least until the inter-event time $\tau(\xi_0)$ (see Remark 9), i.e. $\tau(\xi_0)<\tau_{\xi_0}$. This is exactly what Assumption \ref{assumption 1} offers: trajectories starting from points $\xi_0\in\mathrm{Z}\times\{0\}$ stay in $\Xi\subset\Omega_d$ at least until $\tau(\xi_0)$ (see Figure \ref{z_xi_omega}). In other words, for all points $\xi_0\in\mathrm{Z}\times\{0\}$, we have that $\tau(\xi_0)<\tau_{\xi_0}$ (since $\Xi\subset\Omega_d$) and therefore:
	\begin{equation}\label{psi bounds phi}
		\psi(\xi_0,t) \geq \phi(\xi(t;\xi_0)), \quad \forall \xi_0\in\mathrm{Z}\times\{0\} \text{ and } \forall t\in[0,\tau(\xi_0)].
	\end{equation}
	Regarding the $\{0\}$-part of $\mathrm{Z}\times\{0\}$, note that we only consider initial conditions $\xi_0=(x,0)$, as aforementioned. Finally, transforming $\psi(x,t)$ into $\mu(x,t)$ by incorporating properties \eqref{scaling of bound} and \eqref{one-zero-crossing of bound}, equation \eqref{psi bounds phi} becomes \eqref{bound time validity}. All these statements are formally proven in the Appendix.
	
	\section{An Algorithm that Derives Upper Bounds}\label{algorithm section}
	Although in \cite{anta2012isochrony} SOSTOOLS \cite{sostools} is proposed for deriving the $\delta_{i}$ coefficients, our experience indicates that it is numerically non-robust regarding solving this particular problem. We present an alternative approach based on a Counter-Example Guided Iterative Algorithm (see e.g. \cite{counterexample}), which combines Linear Programming and SMT solvers (e.g. \cite{dreal}), i.e. tools that verify or disprove first-order logic formulas, like \eqref{delta feasibility}.
	
	Consider the following problem formulation:
	\begin{problem} 
		Find a vector of parameters $\Delta$ such that:
		\begin{equation}\label{inequality continuous domain}
			G(x)\cdot\Delta \leq b(x), \quad \forall x \in \Omega,
		\end{equation}
		where $\Delta \in \mathbb{R}^p$, $G:\mathbb{R}^n \rightarrow \mathbb{R}^{m\times p}$, $b:\mathbb{R}^n \rightarrow \mathbb{R}^m$ and $\Omega$ is a compact subset of $\mathbb{R}^n$. 
	\end{problem}
	For the initialization of the algorithm, a finite subset $\hat{\Omega}$ consisting of samples $x_i$ from the set $\Omega$ is obtained. Notice that the relation: $G(x_i)\cdot\Delta\leq b(x_i), \quad \forall x_i \in \hat{\Omega}$ can be formulated as a linear inequality constraint: $\hat{A}\cdot\Delta \leq \hat{b}$, where $\hat{A}=\begin{bmatrix}G^{\top}(x_1) &G^{\top}(x_2) &\dots &G^{\top}(x_i) &\dots \end{bmatrix}^{\top}$ and $\hat{b}=\begin{bmatrix}b(x_1) &b(x_2) &\dots &b(x_i) &\dots \end{bmatrix}^{\top},$ $\forall x_i \in \hat{\Omega}$.  Each iteration of the algorithm consists of the following steps:
	\begin{enumerate}
		\item Obtain a candidate solution $\hat{\Delta}$ by solving the following linear program (LP):
		\begin{equation*}
			\text{minimize} \quad \mathbf{c^{\top}\Delta}, \quad 
			\text{subject to} \quad \mathbf{\hat{A}\cdot\Delta\leq \hat{b}},
		\end{equation*}
		where $\mathbf{c}$ can be freely chosen by the user (we discuss meaningful choices later).
		\item Employing an SMT solver, check if the candidate solution $\hat{\Delta}$ satisfies the inequality on the original domain, i.e. if $G(x)\cdot\hat{\Delta} \leq b(x),$ $\forall x \in \Omega$:
		\begin{enumerate}
			\item If $\hat{\Delta}$ satisfies \eqref{inequality continuous domain}, then the algorithm terminates and returns $\hat{\Delta}$ as the solution.
			\item If $\hat{\Delta}$ does not satisfy \eqref{inequality continuous domain}, the SMT solver returns a point $x_c \in \Omega$ where this inequality is violated, i.e. a counter-example. Add $x_c$ to $\hat{\Omega}$ and update accordingly the matrices $\hat{A}$ and $\hat{b}$. Go to step 1.  
		\end{enumerate}
	\end{enumerate}
	
	Note that in step 2b) a single constraint is added to the LP of the previous step, i.e. $G(x_c)\cdot\Delta\leq b(x_c)$, by concatenating $G(x_c)$ and $b(x_c)$ to the $\hat{A}$ and $\hat{b}$ matrices, respectively.
	
	In order to solve Problem \ref{feasibility problem} in particular, we define $\Delta = \begin{bmatrix}
		\delta_0 &\delta_1 &\dots &\delta_p
	\end{bmatrix} ^{\top}$ and:
	\begin{align*}
		b(\cdot) &= \begin{bmatrix}
			-\mathcal{L}_F^p\phi(z)
			&-\varepsilon
			&\dots
			&0
		\end{bmatrix}^{\top}\\
		G(\cdot) &= \begin{bmatrix}
			-\phi(z) &\dots &-\mathcal{L}_F^{p-1}\phi(z) &-1\\
			-\phi(\xi(0;x_0)) &0 &\dots &-1\\
			-1 &0 &\dots &0\\
			0 &-1 &\dots &0\\
			0  &0 &\ddots  &0\\
			0 &0 &\dots &-1
		\end{bmatrix},
	\end{align*}
	where $z \in \Omega_d$ and $x_0 \in \setz$, with $\Omega_d$ and $\setz$ as in \eqref{omega xi} and Assumption \ref{assumption 1} respectively. Hence, the set $\hat{\Omega}$ consists of points $X_i=(z_i,x_{0_i}) \in \Omega_d\times \setz$, and after solving the corresponding LP, the SMT solver checks if $G(X)\cdot\hat{\Delta} \leq b(X),$ $\forall X \in \Omega_d\times \setz$. Finally, intuitively, tighter estimates of $\mathcal{L}_F^p\phi(z)$ could be obtained by minimizing $\delta_p$, and using the other $\mathcal{L}_F^i\phi(z)$ terms in the right hand side of \eqref{delta ineq}. Hence, $\mathbf{c}=\begin{bmatrix}
		0 &\dots &0 &1
	\end{bmatrix}$ constitutes a wise choice for the LP. In the following section, numerical examples demonstrate the algorithm's efficiency, alongside the validity of our theoretical results.
	\begin{remark}\label{parameters_remark}
		It is recommended that the parameter $d$, which determines the size of $\Omega_d$, is chosen relatively small, in order to help the algorithm terminate faster. Moreover, our experiments indicate 
		that just 2 initial samples $x_i\in\hat{\Omega}$ are sufficient for the algorithm to terminate relatively quickly. Intuitively, this is because letting the algorithm determine most of the samples itself (by finding the counter-example points) is more efficient than dictating samples a-priori. Finally, $p$ should be chosen large enough so that the obtained bound $\mu(\cdot,\cdot)$ is tight, but also small enough so that the dimensionality of the feasibility problem remains small. According to our experience, a choice of $2\leq p \leq 4$ leads to satisfactory results and quick termination of the algorithm, in most cases.
	\end{remark}
	
	\section{Simulation Results}
	In the following numerical examples SOSTOOLS failed to derive upper bounds, as it mistakenly reasoned that Problem \ref{feasibility problem} is infeasible. The upper bounds were derived by the algorithm proposed above.
	\subsection{Homogeneous System}
	In this example, we compare the region-based STC with the STC technique of \cite{tosample} (which is also computationally light) and with ETC (which constitutes the ideal scenario). Consider the following homogeneous control system:
	\begin{equation}
		\dot{\zeta}_1 = -\zeta_1^3+\zeta_1\zeta_2^2, \quad
		\dot{\zeta}_2 = \zeta_1\zeta_2^2 -\zeta_1^2\zeta_2 + \upsilon,
		\label{example1}
	\end{equation}
	with $\upsilon(\zeta)=-\zeta_2^3-\zeta_1\zeta_2^2$. A homogeneous triggering function for an asymptotically stable ETC implementation is:
	\begin{equation*}
		\phi(\xi(t;x)) = |\varepsilon(t;x)|^2-0.0127^2\sigma^2|\zeta(t;x)|^2, \quad \sigma\in(0,1),
	\end{equation*}
	where $\xi(\cdot)$ denotes the trajectories of the corresponding extended system \eqref{etc system}, $\varepsilon(\cdot)$ is the measurement error \eqref{measurement error}, and $x$ is the previously sampled state. As in \cite{anta2012isochrony}, we select $\sigma=0.3$.
	
	In order to test the proposed region-based STC scheme, Problem \ref{feasibility problem} is solved by employing the algorithm presented in the previous section. In particular, we set $p=3$, $\Omega_d=\{x\in\mathbb{R}^4: |x|<0.9\}$ and $\Xi=\setz\times\sete$, where $\setz=\{x\in\mathbb{R}^2: V(x)\leq 0.1\}$, $\sete=\{x_0-x\in\real^2:\text{ }x,x_0\in\setz\}$ and $V(x)=\frac{1}{2}x_1^2+\frac{1}{2}x_2^2$ is a Lyapunov function for the system. Observe that $\Xi\subset\Omega_d$. The coefficients found are
	$\delta_0 = 0$, $\delta_1=0.1272$, $\delta_2=0$ and $\delta_3=0.0191$.
	In order to construct $\mu(x,t)$ according to \eqref{mu}, we fix $r=0.29$ and the set $D=\{x\in\mathbb{R}^2: |x|=r\}$ indeed lies in the interior of $\setz$. The state space is discretized into 348 regions $\reg_i$ with corresponding self-triggered inter-event times $\tau_{348}=0.1s$ and $\tau_i=1.01^{-2}\tau_{i+1}$. Indicatively, 4 derived approximations of isochronous manifolds are shown in Fig. \ref{example1_discr}. Observe that the approximations satisfy \eqref{star manifolds} and \eqref{encirclement}.
	\begin{figure}[!h]
		\centering
		\includegraphics[width=2.4in]{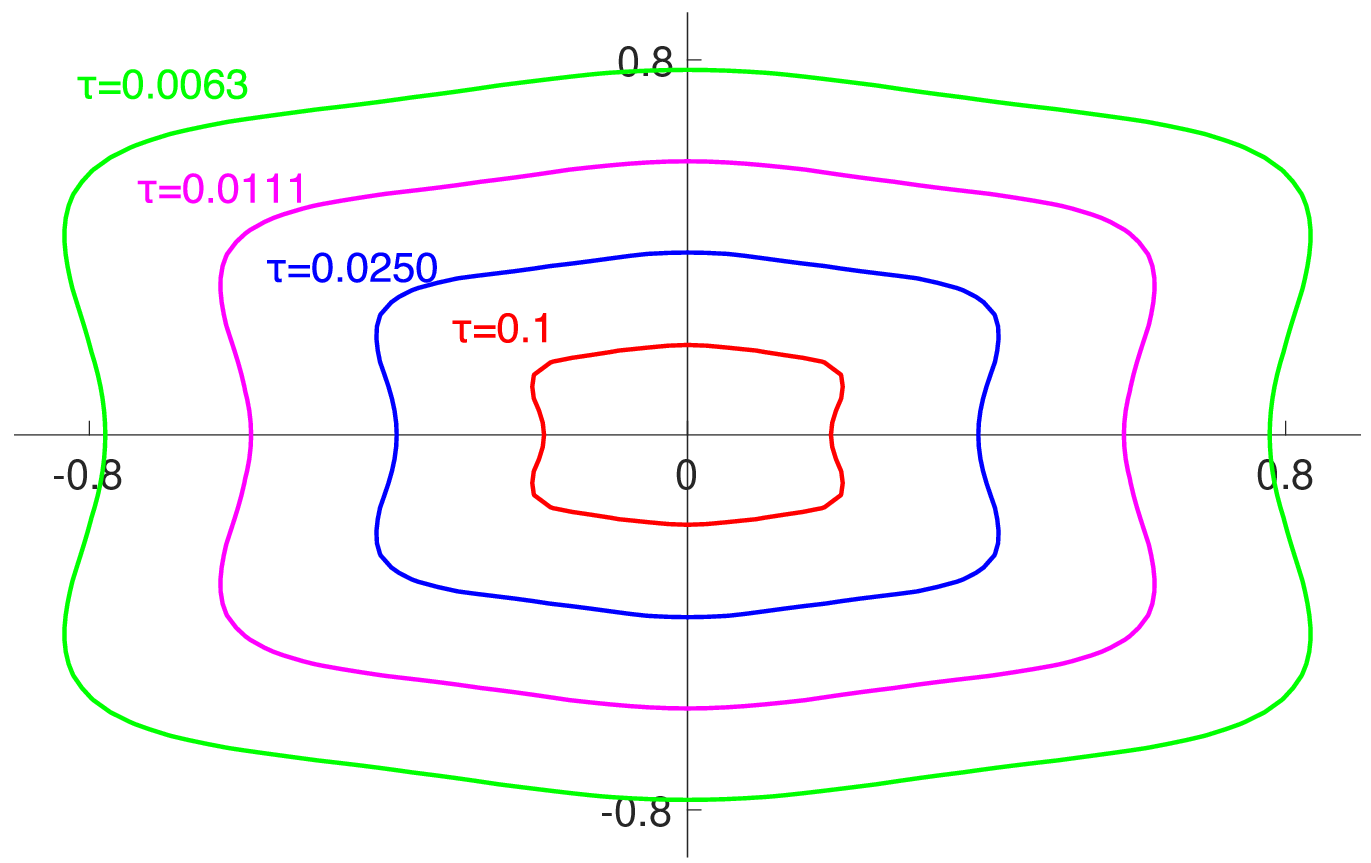}
		\caption{Approximations of isochronous manifolds of the ETC implementation of \eqref{example1}.}
		\label{example1_discr}
	\end{figure}
	
	The system is initiated at $x=[1,1]^{\top}$ and the simulation lasts for 5s. Fig. \ref{example1_stc_etc_times} compares the time evolution of the inter-event times of the region-based STC, the STC proposed in \cite{tosample} and ETC. In total, ETC triggered 383 times, the region-based STC triggered 554 times, whereas the STC of \cite{tosample} triggered 2082 times. 
	\begin{figure}[!h]
		\centering
		\includegraphics[width=3in]{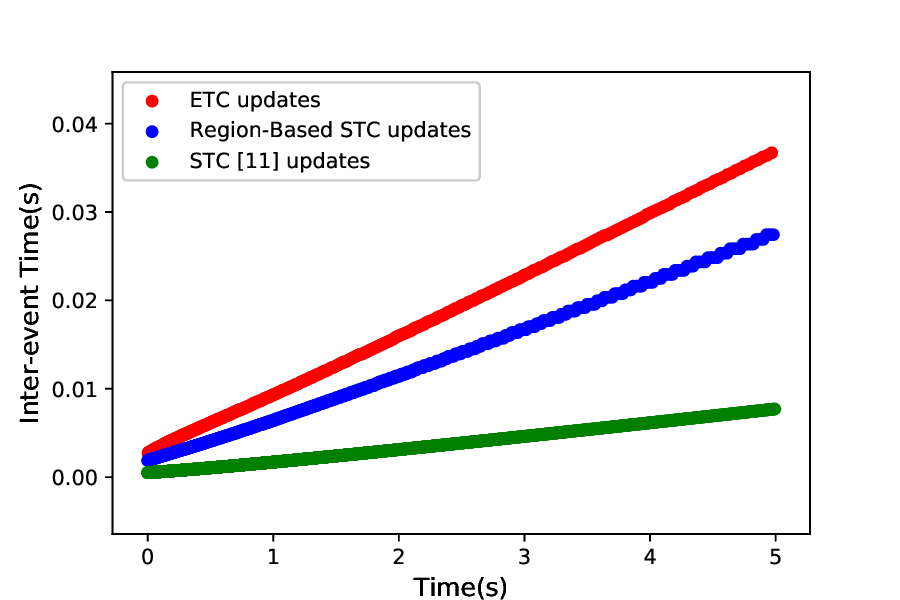}
		\caption{The time evolution of region-based STC, STC of \cite{tosample} and ETC inter-event times along the trajectory of \eqref{example1}.}
		\label{example1_stc_etc_times}
	\end{figure}
	Given Fig. \ref{example1_stc_etc_times} and the number of total updates for each technique we can conclude that: 1) the region-based STC scheme highly outperforms the STC of \cite{tosample} and 2) the performance of the region-based STC scheme follows closely the ideal performance of ETC, while reducing the computational load in the controller.  
	
	\subsection{Non-Homogeneous System}
	Consider the forced Van der Pol oscillator:
	\begin{equation}
		\dot{\zeta}_1(t) = \zeta_2(t), \quad
		\dot{\zeta}_2(t) = (1-\zeta^2_1(t))\zeta_2(t) -\zeta_1(t) + \upsilon(t),
		\label{vanderpol}
	\end{equation}
	with $\upsilon(t) = -\zeta_2(t)-(1-\zeta^2_1(t))\zeta_2(t)$. Assuming an ETC implementation, and homogenizing the system with an auxilliary variable $w$, according to the procedure presented in \cite{anta2012isochrony} (Lemma IV.4 therein), the extended system \eqref{etc system} becomes:
	\begin{equation}
		\dot{\xi} = 
		\begin{bmatrix}
			\xi_2w^2\\
			(w^2-\xi^2_1)\xi_2 - \xi_1w^2 -\epsilon_2w^2-(w^2-\epsilon_1^2)\epsilon_2\\
			0\\
			-\xi_2w^2\\
			-(w^2-\xi^2_1)\xi_2 + \xi_1w^2 +\epsilon_2w^2+(w^2-\epsilon_1^2)\epsilon_2\\
			0
		\end{bmatrix}
		\label{homogenized system}
	\end{equation}
	where $\xi=[\zeta_1,\zeta_2,w,\varepsilon_1,\varepsilon_2,\varepsilon_w]^{\top}$, $\epsilon_i=\xi_i+\varepsilon_i$, with $\varepsilon$ being the measurement error \eqref{measurement error}. The homogeneity degree of the extended system is $\alpha=2$.
	Observe that the trajectories of the original system \eqref{vanderpol} coincide with the trajectories of \eqref{homogenized system}, if the inital condition for $w$ is $w_0=1$. A triggering function based on the approach of \cite{tabuada2007etc} has been obtained in \cite{postoyan2015framework}:
	\begin{equation*}
		\phi(\zeta(t;x),\varepsilon(t;0))=\phi(\xi(t;x,w_0)) = W(|\varepsilon|) - V(\xi_1,\xi_2),
	\end{equation*}
	where $W(|\varepsilon|)=2.222(\varepsilon_1^2+\varepsilon_2^2)$ and $V(\xi_1,\xi_2) = 0.0058679\xi_1^2 + 0.0040791\xi_1\xi_2 + 0.0063682\xi_2^2$ is a Lyapunov function for the original system. Note, that $\phi(\xi(t;x,w_0))$ is already homogeneous of degree $1$. We fix $\setz = [-0.01,0.01]^3$ and define the following sets: 
	\begin{align*}
		\Phi&=\bigcup\limits_{x_0\in[-0.01,0.01]^2}\{x\in\real^2: W(|x_0-x|)-V(x_1,x_2)\leq0\},\\
		\sete &= \{x_0-x\in\real^2:\text{ }x_0\in[-0.01,0.01]^2, \text{ }x\in\Phi\},\\
		\Xi&=\Phi\times[-0.01,0.01]\times\sete\times\{0\}.
	\end{align*}
	Notice that $\Phi$ is exactly such that for all $x_0\in[-0.01,0.01]^2$: $\phi(\xi(t;x_0,w_0))\leq0 \implies \zeta(t;x_0)\in \Phi$. Then, from the definition of $\sete$ and the observation that $w$ remains constant at all time, it is easily verified that $\setz$ and $\Xi$ are compact, contain the origin and satisfy the requirement of Assumption \ref{assumption 1}.
	
	Let us compare the region-based STC to the ideal performance of ETC. Solving Problem \ref{feasibility problem} for $p=3$, we obtain $\delta_0\approx4.3\cdot 10^{-4}$, $\delta_1=0$ and $\delta_2\approx2.1\cdot 10^{-2}$ and $\delta_3\approx 4\cdot10^{-6}$. To obtain $\mu(x,w,t)$ as in \eqref{mu}, we fix $r=0.009$ and $D=\{x\in\mathbb{R}^3: |x|=r\}$ indeed lies in the interior of $\setz$. The state space is discretized into 267 regions $\reg_i$, with $\tau_{126}=0.1$s and $\tau_i=1.01^{-2}\cdot\tau_{i+1}$. The system is initiated at $x=[-0.3,1.7]^{\top}$, and the simulation duration is 5s. In total, the ETC implementation triggered 114 times, whereas the region-based STC implementation triggered 320 times. This is a much better result than the one presented in the published version \cite{delimpaltadakis_tac}, where region-based STC triggered 1448 times. In \cite{delimpaltadakis_tac}, it was conjectured that the conservatism of region-based STC in this case was due to homogenization and lifting in higher dimensions, but it appears that it was more a matter of finding a better set of parameters $\delta$. 
	
	Figures \ref{stc_times_traj} and \ref{etc_times_traj} demonstrate the evolution of the sampling times of region-based STC and ETC, respectively, along the trajectory. In particular, the curve on the $\zeta_1-\zeta_2$ plane is the trajectory of the system, while the 3D curve above the trajectory is the value of the inter-event time of the corresponding point on the trajectory. The direction of the trajectory is from the blue-colored points to the red-colored points. In Fig. \ref{stc_times_traj} the intervals for which the inter-event time remains constant correspond to segments of the trajectory in which the state vector lies inside one particular region $\reg_i$.
	\begin{figure}[!h]
		\centering
		\includegraphics[width=3in]{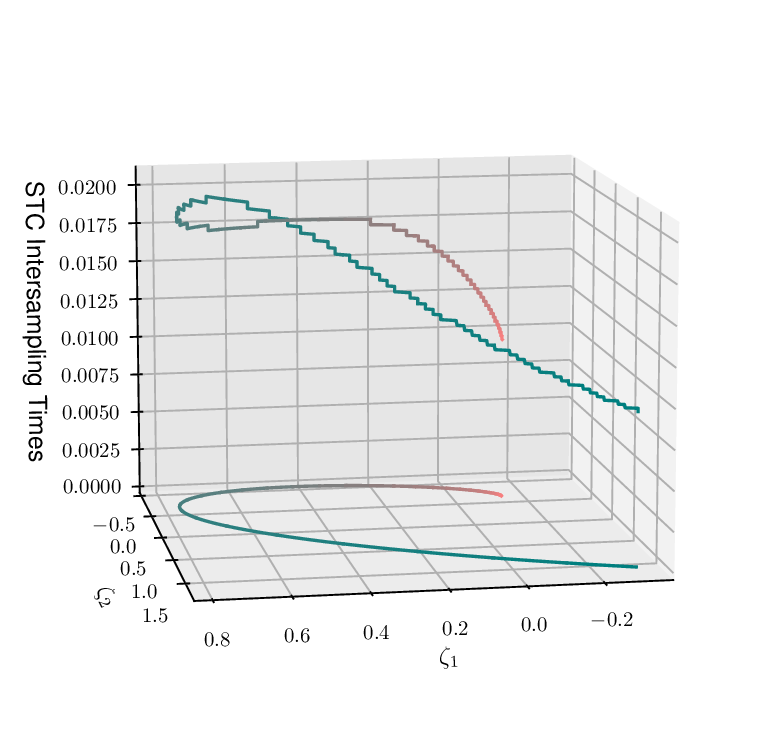}
		\caption{The evolution of region-based STC inter-event times along the trajectory of the forced Van der Pol oscillator.}
		\label{stc_times_traj}
	\end{figure}
	\begin{figure}[!h]
		\centering
		\includegraphics[width=3in]{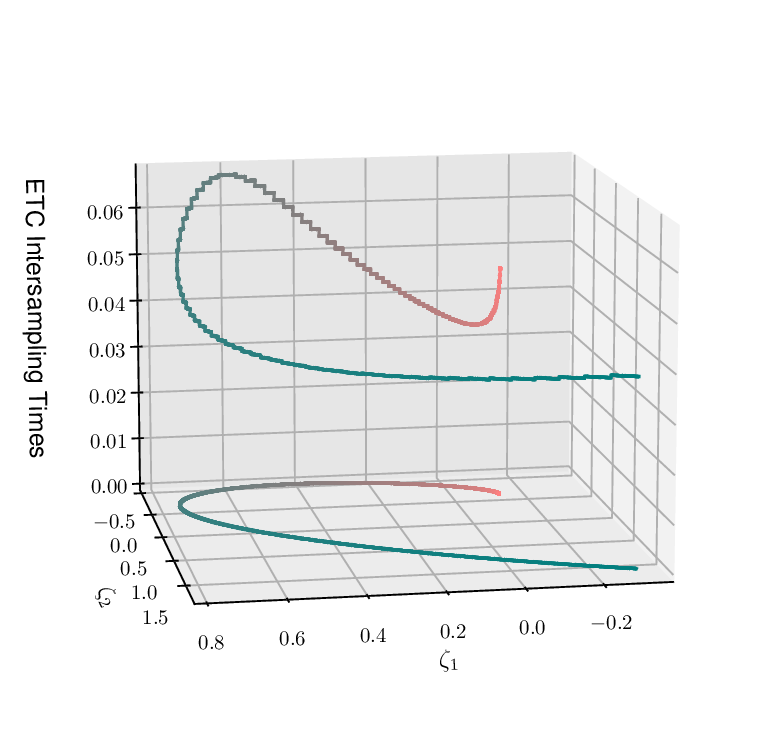}
		\caption{The evolution of ETC inter-event times along the trajectory of the forced Van der Pol oscillator.}
		\label{etc_times_traj}
	\end{figure}
	First, note that in contrast to the previous example, the sampling times do not increase as the system approaches the origin, since the system is not homogeneous and the scaling property \eqref{trig_cond_scaling} does not apply here, i.e. $\phi(\zeta(t;\lambda x))=\phi(\xi(t;\lambda x, 1)) \neq \lambda^{\theta+1}\phi(\xi(\lambda^{\alpha}t;x,1))=\lambda^{\theta+1}\phi(\zeta(\lambda^{\alpha}t;x))$. In fact, as stated in \cite{anta2012isochrony}, the scaling law that applies is : 
	\begin{equation}
		\phi(\xi(t;\lambda x, \lambda w)) = \lambda^{\theta+1}\phi(\xi(\lambda^{\alpha}t;x,w)).
		\label{homogenized scaling}
	\end{equation} 
	However, the similarity of the two figures indicates that the sampling times of the region-based STC approximately follow the trend of the ETC sampling times. This indicates that the approximations of isochronous manifolds determined by $\mu(x,w,t)$ preserve the spatial characteristics of the actual isochronous manifolds of \eqref{vanderpol}. Intuitively, the preservation of the spatial characteristics could be attributed to the fact that $\mu(x,w,t)$ also satisfies \eqref{homogenized scaling}, which determines the scaling of the isochronous manifolds of the homogenized system \eqref{homogenized system} along its homogeneous rays. Besides, note that the isochronous manifolds of the original system \eqref{vanderpol} are the intersections of the isochronous manifolds of \eqref{homogenized system} with the $w=1$-plane.
	
	\begin{remark}
		This simulation demonstrates that, as mentioned in Remark \ref{framework remark}, the results presented in this work are transferable to any smooth, not necessarily homogeneous, system. 
	\end{remark}
	\section{Conclusion and Future Work}
	In this work, a region-based STC policy for nonlinear systems that enables a trade-off between online computations and updates was presented. The scheme employs a state-space partition dictated by inner-approximations of isochronous manifolds of nonlinear ETC systems. To derive such approximations, theoretical issues of \cite{anta2012isochrony} have been addressed and a computational algorithm has been proposed. Finally, simulation results have demonstrated the effectiveness of region-based STC.
	
	\section{Acknowledgements}
	The authors would like to thank Cees F. Verdier for assisting to the implementation of the algorithm described in Section \ref{algorithm section}, Adolfo Anta and Paulo Tabuada for fruitful discussions on this work, and Xiangru Xu for pointing out references \cite{comparison1971} and \cite{lemma_counterexample}.
	
	\section*{Appendix}
	To conduct the proofs of the previously presented lemmas and theorems, we first introduce some preliminary concepts.
	
	\subsection{Higher Order Differential Inequalities}
	\begin{definition}[Type $W^*$ functions \cite{comparison1971}]
		The function $g: \mathbb{R}^n\rightarrow\mathbb{R}$ is said to be of type $W^*$ on a set $S\subseteq\mathbb{R}^n$ if $g(x)\leq g(y)$ for all $x,y\in S$ such that $x_n=y_n$, $x_i\leq y_i$ ($i=1,2,...,n-1$), where $x_i,y_i$ denote the $i$-th component of the $x$ and $y$ vector respectively.
	\end{definition}
	\begin{definition}[Right maximal solution \cite{comparison1971}]
		Consider the $p$-th order differential equation:
		\begin{equation} \label{comparison equation}
			u^{(p)}(t)=g(t,u(t),\dot{u}(t),\dots,u^{(p-1)}(t)),
		\end{equation}
		where $u:\mathbb{R}^+\rightarrow\mathbb{R}$ and $g(\cdot)$ is continuous on $[0,T]\times\mathbb{R}^p$. A solution $u_m(t;t_0,U_m)$, where $t_0$ is the initial time instant and $U_m\in\mathbb{R}^p$ is the vector of initial conditions, is called a right maximal solution of \eqref{comparison equation} on an interval $[t_0,\alpha)\subset[0,T]$ if
		\begin{equation*}
			u^{(i)}(t;t_0,U_0)\leq u^{(i)}_m(t;t_0,U_m), \quad t\in[t_0,\alpha)\cap[t_0,\alpha^*),
		\end{equation*}
		for any other solution $u(t;t_0,U_0)$ with initial condition $U_0\preceq U_m$ defined on $[t_0,\alpha^*)$, for all $i=0,1,2,\dots,m-1$. 
	\end{definition}
	\begin{lemma}[Higher Order Comparison Lemma \cite{comparison1971}]
		\label{higher order comparison lemma}
		Consider a system of first order differential equations:
		\begin{equation} \label{dif eq system}
			\dot{\zeta}(t) = f(t, \zeta(t)).
		\end{equation}
		Let $\upsilon:D_r\rightarrow\mathbb{R}$ and let $\upsilon\in \mathcal{C}^p$, $f\in \mathcal{C}^{p-1}$ on $D_r$, where $D_r=\{(t,x)|0\leq t\leq T <+\infty, |x| < r\}$. Let $g(\cdot)$ of \eqref{comparison equation} be of type $W^*$ on $S \subseteq \mathbb{R}^{p+1}$ for each $t$, where
		\begin{equation*}
			\begin{aligned}
				S=\Big\{\Big(t,\upsilon(t,\zeta(t)),\dot{\upsilon}(t,\zeta(t)),\dots,\upsilon^{(p-1)}(t,\zeta(t))\Big)|\text{ }(t,\zeta(t))\in D_r\Big\}
			\end{aligned}
		\end{equation*} and
		\begin{equation*}
			\upsilon^{(i)}(t,\zeta(t)) = \frac{\partial\upsilon^{(i-1)}(t,\zeta(t))}{\partial t} + \frac{\partial\upsilon^{(i-1)}(t,\zeta(t))}{\partial \zeta(t)}\cdot f(t, \zeta(t)).
		\end{equation*}
		Assume that:
		\begin{equation*}
			\upsilon^{(p)}(t,\zeta(t))\leq g(t,\upsilon(t,\zeta(t)),\dot{\upsilon}(t,\zeta(t)),\dots,\upsilon^{(p-1)}(t,\zeta(t))),
		\end{equation*}
		for all $(t,\zeta(t))\in D_r$. Let $J$ denote the maximal interval of existence of the right maximal solution $u_m(t;0,U_m)$ of \eqref{comparison equation}. If $\upsilon^{(i)}(0,\zeta_0)=u^{(i)}_m(0;0,U_m)$ ($i=0,1,2,\dots,p-1$), where $u^{(i)}_m(0;0,U_m)$ are the components of the initial condition $U_m$ of $u_m(t;0,U_m)$, then:
		\begin{equation*}
			\upsilon^{(i)}(t,\zeta(t;0,\zeta_0)) \leq u_m^{(i)}(t;0,U_m), \quad t\in J \cap [0,T],
		\end{equation*}
		for all $i=0,1,2,\dots,p-1$. 
	\end{lemma}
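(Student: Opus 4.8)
The plan is to recast the scalar $p$-th order comparison equation \eqref{comparison equation} as a first-order system in $\mathbb{R}^p$ written in companion (phase-variable) form, and then to invoke the classical vector comparison principle for \emph{quasimonotone} systems; the type $W^*$ hypothesis on $g$ turns out to be exactly the quasimonotonicity condition that this principle requires.

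First I would stack the iterated derivatives into the vectors
\[
V(t) = \bigl(\upsilon(t,\zeta(t)),\,\dot\upsilon(t,\zeta(t)),\,\dots,\,\upsilon^{(p-1)}(t,\zeta(t))\bigr)^{\top},\qquad
U(t) = \bigl(u(t),\,\dot u(t),\,\dots,\,u^{(p-1)}(t)\bigr)^{\top},
\]
and rewrite \eqref{comparison equation} as $\dot U = G(t,U)$ with $G_i(t,U)=U_{i+1}$ for $i=1,\dots,p-1$ and $G_p(t,U)=g(t,U_1,\dots,U_p)$. This is precisely the companion structure carried by the matrix $A$ in \eqref{linear system}. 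Next I would verify that $G$ is quasimonotone nondecreasing, meaning $G_i(t,x)\le G_i(t,y)$ whenever $x\preceq y$ and $x_i=y_i$. For $i<p$ this is immediate, since $G_i(t,x)=x_{i+1}\le y_{i+1}=G_i(t,y)$ already follows from $x\preceq y$. For $i=p$ the requirement reads $g(t,x_1,\dots,x_{p-1},x_p)\le g(t,y_1,\dots,y_{p-1},y_p)$ whenever $x_j\le y_j$ for all $j$ and $x_p=y_p$, which is verbatim the definition of $g$ being of type $W^*$. Thus the type $W^*$ hypothesis is exactly what renders the companion field quasimonotone.

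Then I would show that $V$ is a subsolution of this system along the trajectory $\zeta(t;0,\zeta_0)$. By the very definition of the total derivatives $\upsilon^{(i)}$ one has $\dot V_i = V_{i+1} = G_i(t,V)$ with equality for $i<p$, while the standing differential inequality $\upsilon^{(p)}\le g(t,\upsilon,\dots,\upsilon^{(p-1)})$ gives $\dot V_p\le G_p(t,V)$; hence $\dot V\preceq G(t,V)$. Since the initial data match, $V(0)=U_m$, and $u_m$ is the right maximal solution of $\dot U=G(t,U)$, the quasimonotone comparison principle yields $V(t)\preceq u_m(t)$ on the common interval $J\cap[0,T]$, which componentwise is exactly the claimed $\upsilon^{(i)}(t,\zeta(t;0,\zeta_0))\le u_m^{(i)}(t;0,U_m)$ for $i=0,\dots,p-1$.

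The crux, and the step I expect to be the main obstacle, is establishing the quasimonotone comparison principle itself, as the naive comparison of solutions fails without quasimonotonicity. I would obtain it by the standard perturbation / first-crossing argument: introduce the perturbed maximal solutions $U^{\delta}$ of $\dot U^{\delta}=G(t,U^{\delta})+\delta\mathbf{1}$, which converge to $u_m$ as $\delta\downarrow 0$, and argue that the strict inequality $V\prec U^{\delta}$ cannot first fail. Indeed, if some component $j$ first satisfies $V_j(t^{\star})=U_j^{\delta}(t^{\star})$ while still $V(t^{\star})\preceq U^{\delta}(t^{\star})$, then quasimonotonicity gives $\dot V_j(t^{\star})\le G_j(t^{\star},V)\le G_j(t^{\star},U^{\delta})<\dot U_j^{\delta}(t^{\star})$, contradicting a crossing from below; letting $\delta\downarrow 0$ recovers the non-strict inequality. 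The delicate technical points are the existence and continuity of the right maximal solution and the admissibility of the crossing argument while $(t,\zeta(t))$ remains in $D_r$, which is exactly why the regularity hypotheses $\upsilon\in\mathcal{C}^p$, $f\in\mathcal{C}^{p-1}$ and continuity of $g$ are imposed.
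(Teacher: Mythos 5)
The paper does not prove this lemma: it is imported verbatim from \cite{comparison1971} and used as a black box, so there is no in-paper argument to compare yours against. Your proof is the standard one for this result and is sound: rewriting \eqref{comparison equation} in companion form, observing that the type $W^*$ hypothesis is precisely the Kamke--M\"uller quasimonotonicity condition for the last row (the first $p-1$ rows being trivially quasimonotone), checking that the stacked total derivatives form a subsolution, and closing with the perturbation/first-crossing argument that underlies the vector comparison principle. The only caveats are the ones you already flag yourself --- the convergence of the perturbed solutions to the right maximal solution and the restriction of the conclusion to the interval on which $(t,\zeta(t))$ remains in $D_r$ --- and these are handled in the cited reference.
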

	
	\subsection{Monotone Systems}
	\begin{definition}[Monotone System\cite{monotone2008}]
		Consider a system:
		\begin{equation} \label{monotone}
			\dot{\zeta}(t) = f(\zeta(t)).
		\end{equation}
		The system \eqref{monotone} is called monotone if:\\ \centerline{$\zeta_0 \preceq \zeta_1\implies \zeta(t;t_0,\zeta_0)\preceq\zeta(t;t_0,\zeta_1)$.}
	\end{definition}
	\begin{proposition}[\hspace{1sp}\cite{monotone2008}] \label{monotone systems proposition}
		Consider the system \eqref{monotone}.
		If the off-diagonal entries of the Jacobian $\frac{\partial f}{\partial \zeta}$ are non-negative, then the system \eqref{monotone} is monotone.
	\end{proposition}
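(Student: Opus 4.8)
The plan is to recognize the hypothesis as the Kamke--M\"uller (quasi-monotonicity) condition and then run a \emph{first-contact} comparison argument. First I would establish the intermediate claim that the sign condition on the off-diagonal Jacobian entries implies the following: for every index $k$ and every pair $a\preceq b$ with $a_k=b_k$, one has $f_k(a)\leq f_k(b)$. This follows from the fundamental theorem of calculus along the segment joining $a$ to $b$: writing $f_k(b)-f_k(a)=\int_0^1 \sum_{j}\tfrac{\partial f_k}{\partial \zeta_j}(a+s(b-a))\,(b_j-a_j)\,ds$, the $j=k$ summand vanishes because $b_k-a_k=0$, while for $j\neq k$ the factor $\tfrac{\partial f_k}{\partial \zeta_j}\geq 0$ by hypothesis and $b_j-a_j\geq 0$ since $a\preceq b$; hence every surviving summand is non-negative and $f_k(a)\leq f_k(b)$.

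With this quasi-monotonicity in hand, I would fix $\zeta_0\preceq\zeta_1$ and abbreviate $x(t)=\zeta(t;t_0,\zeta_0)$ and $y(t)=\zeta(t;t_0,\zeta_1)$. Rather than comparing $x$ and $y$ directly, I would introduce the perturbed trajectory $y^{\eta}$ solving $\dot{y}^{\eta}=f(y^{\eta})+\eta\mathbf{1}$ with $y^{\eta}(t_0)=\zeta_1$, where $\mathbf{1}=(1,\dots,1)^{\top}$ and $\eta>0$. Setting $z^{\eta}(t)=y^{\eta}(t)-x(t)$, we have $z^{\eta}(t_0)=\zeta_1-\zeta_0\succeq 0$. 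Suppose, for contradiction, that some component of $z^{\eta}$ becomes negative, and let $t_1$ be the infimum of times at which this occurs. By continuity $z^{\eta}(t_1)\succeq 0$, and there is a first-contact index $k$ with $z^{\eta}_k(t_1)=0$ and $\dot z^{\eta}_k(t_1)\leq 0$, since $z^{\eta}_k$ is about to turn negative. But at $t_1$ we have $x(t_1)\preceq y^{\eta}(t_1)$ with equal $k$-th components, so the claim of the first paragraph gives $f_k(x(t_1))\leq f_k(y^{\eta}(t_1))$, whence $\dot z^{\eta}_k(t_1)=f_k(y^{\eta}(t_1))+\eta-f_k(x(t_1))\geq \eta>0$, a contradiction. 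Therefore $x(t)\preceq y^{\eta}(t)$ throughout the common interval of existence. Finally, letting $\eta\to 0^{+}$ and invoking continuous dependence of solutions on parameters, $y^{\eta}(t)\to y(t)$, so $x(t)\preceq y(t)$, which is exactly the monotonicity claim $\zeta(t;t_0,\zeta_0)\preceq\zeta(t;t_0,\zeta_1)$.

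The main obstacle, and the reason the perturbation is needed, is that the plain difference $w=y-x$ only yields $\dot w_k\geq 0$ at a first-contact index $k$, which does not contradict $\dot w_k\leq 0$; the $\eta\mathbf{1}$ term is precisely what upgrades this to a strict inequality and closes the contradiction, after which the limit recovers the non-strict statement. A secondary technicality is justifying that $t_1>t_0$, or equivalently handling components with $\zeta_{0,k}=\zeta_{1,k}$ at the initial time, which the same derivative estimate settles since $\dot z^{\eta}_k(t_0)\geq\eta>0$ there as well.
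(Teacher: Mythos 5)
Your proof is correct. Note, however, that the paper does not prove this proposition at all: it is imported verbatim from the cited reference on monotone systems theory and used as a black box (in the proof of Lemma \ref{our lemma} and Proposition \ref{eta bounds phi proposition}). What you have reconstructed is the standard Kamke--M\"uller argument for cooperative systems, and both halves are sound: the integral representation of $f_k(b)-f_k(a)$ along the segment correctly converts the sign condition on the off-diagonal Jacobian entries into quasi-monotonicity (this implicitly uses that the domain is convex and $f\in\mathcal{C}^1$, both harmless here since the paper works on $\mathbb{R}^n$ with smooth vector fields), and the $\eta\mathbf{1}$-perturbation is exactly the right device to turn the first-contact inequality $\dot z_k^{\eta}(t_1)\geq 0$ into the strict $\dot z_k^{\eta}(t_1)\geq\eta>0$ needed for the contradiction, with the non-strict statement recovered in the limit by continuous dependence. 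One small point you could make fully explicit: the existence of a single first-contact index $k$ valid along a sequence $t_n\downarrow t_1$ requires a pigeonhole step over the finitely many components, after which $z_k^{\eta}(t_1)=0$ and $\dot z_k^{\eta}(t_1)\leq 0$ follow from continuity and the difference quotient; as written this is asserted slightly informally but is easily repaired. In the context of this paper your argument is a self-contained substitute for the external citation, which is arguably a strengthening rather than a deviation.
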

	
	\subsection{Technical Proofs}
	\begin{proof}[\textbf{Proof of Theorem \ref{theorem 1}}]
		Define $\tau^{\downarrow}(x) = \inf\{t>0 : \mu(x,t)=0\}$. \eqref{one-zero-crossing of bound} implies that $\mu(x,\tau^{\downarrow}(x))=0$ is the only zero-crossing of $\mu(x,t)$ w.r.t. $t$ for any given $x$. Hence: 
		\begin{equation*}
			\underline{M}_{\tau_{\star}} =\{x\in\mathbb{R}^n: \mu(x,\tau_{\star})=0\}= \{x\in\mathbb{R}^n: \tau^{\downarrow}(x)=\tau_{\star}\},
		\end{equation*}
		Equations \eqref{scaling of bound} and \eqref{one-zero-crossing of bound} imply that $\underline{M}_{\tau_{\star}}$ satisfies \eqref{star manifolds} and \eqref{encirclement} (see Remark \ref{homogeneity remark}).
		
		It is left to prove that $\underline{M}_{\tau_{\star}}$ is an inner approximation of $M_{\tau_{\star}}$. Notice that $\phi(\xi(\tau(x);x))=0$ together with \eqref{bound time validity} and \eqref{bound init cond}, imply that the first zero-crossing of $\mu(x,t)$ happens before the one of the triggering function: 
		\begin{equation}
			\tau^{\downarrow}(x)\leq\tau(x).
			\label{downarrow leq}
		\end{equation}
		Furthermore, \eqref{scaling of bound} implies that $\tau_{\downarrow}(x)$ also satisfies the scaling law \eqref{scaling} (the proof for this argument is the exact same to the one derived in \cite{tosample} for the scaling laws of inter-event times.) The fact that both $\tau_{\downarrow}(x)$ and $\tau(x)$ satisfy \eqref{scaling}, i.e. they are strictly decreasing functions along homogeneous rays, alongside \eqref{downarrow leq} implies that:
		$\tau(x_1)=\tau^{\downarrow}(x_2)=\tau_{\star} \implies |x_1| \geq |x_2|$, for all $x_1$,$x_2$ on a homogeneous ray. Thus, since $\underline{M}_{\tau_{\star}}$ satisfies \eqref{star manifolds}, we get that for all $x \in \underline{M}_{\tau_{\star}}$:
		\begin{equation*}
			\exists! \kappa_x\geq 1 \text{ s.t. } \kappa_x x \in M_{\tau_i} \text{ } \mathrm{and} \text{ } \not\exists \lambda_x\in(0,1) \text{ s.t. } \lambda_x x \in M_{\tau_i}.
		\end{equation*}
		The proof is now complete.	 
	\end{proof}
	
	\begin{proof}[\textbf{Proof of Lemma \ref{our lemma}}]
		Introduce the following linear system:
		\begin{equation} \label{aux system}
			\dot{\chi}=\begin{bmatrix}
				0 &1 &0 &\dots &0 &0\\
				0 &0 &1 &\dots &0 &0\\
				\vdots &\vdots &\qquad &\ddots &\vdots &\vdots\\
				0 &0 &0 &\dots &1 &0\\
				0 &0 &0 &\dots &0 &1\\
				\delta_0 &\delta_1 &\delta_2 &\dots &\delta_{p-2} &\delta_{p-1}\\
			\end{bmatrix}\chi + 
			\begin{bmatrix}
				0\\
				\vdots\\
				0\\
				\delta_p
			\end{bmatrix}.
		\end{equation}
		Notice that \eqref{aux system} represents the $p$-th order differential equation $\chi^{(p)}=\sum_{i=0}^{p-1}\delta_i\chi^{(i)}+\delta_p$.
		The proof makes use of Lemma \ref{higher order comparison lemma}. Using the notation of Lemma \ref{higher order comparison lemma}, we identify:
		\begin{align*}
			&v(t,\xi(t))\equiv\phi(\xi(t)), \quad \forall\xi(t)\in\Omega_d,\\
			&f(t,\xi(t))\equiv F(\xi(t)), \quad \forall \xi(t) \in \Omega_d,\\
			&g(t,v,v',...,v^{(p-1)})\equiv \sum_{i=0}^{p-1}\delta_iv^{(i)} + \delta_p.
		\end{align*}
		
		For $t>\tau_{\xi_0}$, $\xi(t;\xi_0)$ may not belong to $\Omega_d$. Thus, $\upsilon(\cdot)$ is well-defined only in the interval $[0,\tau_{\xi_0})$. Since $\delta_i\geq0$ for all $i$, $g$ is of type $W^*$ in $\mathbb{R}^+\times\mathbb{R}^p$. Moreover, it is clear that $v\in C^p$ and $f \in C^{p-1}$ on $[0,\tau_{\xi_0})\times\Omega_d$. Inequality \eqref{delta ineq} translates to $v^{(p)}(t,z) \leq g(t,v,v',...,v^{(p-1)})$
		for $(t,z)\in[0,\tau_{\xi_0})\times\Omega_d$. 
		
		Furthermore, according to Proposition \ref{monotone systems proposition}, the linear system \eqref{aux system} is monotone, since all off-diagonal entries of its jacobian are non-negative ($\delta_i\geq0$ for all $i$). This implies that any solution of \eqref{aux system} is a right maximal solution, and its maximal interval of existence is $J=[0,+\infty)$. Consider the solution $\chi(t;X(\xi_0))$, where $X(\xi_0)=\begin{bmatrix}\phi(\xi_0) &\mathcal{L}_F\phi(\xi_0) &\dots &\mathcal{L}_F^{p-1}\phi(\xi_0)\end{bmatrix}^{\top}$. Observe that the components of the initial condition $X(\xi_0)$ and $\mathcal{L}_F^i\phi(z)$ ($i=0,1,2,\dots,p-1$) are equal. All conditions of Lemma \eqref{higher order comparison lemma} are satisfied. Thus, we can conclude that for all $\xi_0\in\Omega_d$:
		\begin{equation*}
			\phi(\xi(t;\xi_0)) \leq \chi_1(t;X(\xi_0)), \quad \forall t \in [0,\tau_{\xi_0}).
		\end{equation*}
		Notice that $\psi_1(y(\xi_0),t)=\chi_1(t;X(\xi_0))$ for all $t$. Hence $\phi(\xi(t;\xi_0)) \leq \psi_1(y(\xi_0),t), \quad \forall t \in [0,\tau_{\xi_0}).$
	\end{proof}
	
	To prove Theorem \ref{main theorem}, we first derive the following results.
	\begin{proposition}
		Consider coefficients $\delta_i$ ($i=0,1,...,p$) solving Problem \ref{feasibility problem}, and define an upper-bound $\psi_1(x,t)$ of the triggering function $\phi(\xi(t;x))$ as dictated in Lemma \ref{our lemma}. Let:
		\begin{equation}
			\eta_1(x,t) := C \boldsymbol{e}^{At}\eta(x,0), 
			\label{eta}
		\end{equation}
		where $A$ is as in \eqref{linear system}, $C = \begin{bmatrix} 1 &0 &\dots &0 \end{bmatrix}$  and:
		\begin{equation}
			\eta(x,0) := 
			\begin{bmatrix}
				\phi\Big((x,0)\Big)\\
				\max\bigg(\mathcal{L}_f\phi\Big((x,0)\Big), 0\bigg)\\
				\vdots\\
				\max\bigg(\mathcal{L}_f^{p-1}\phi\Big((x,0)\Big), 0\bigg)\\
				\delta_p
			\end{bmatrix}.\\
			\label{eta init cond}
		\end{equation}
		The function $\eta_1(x,t)$ satisfies:
		\begin{equation}
			\eta_1(x,t)\geq\phi(\xi(t;x)), \quad \forall t \in [0,\tau(x)] \text{ }\mathrm{and}\text{ } \forall x \in \setz.
			\label{eta bounds phi}
		\end{equation}
		\label{eta bounds phi proposition}
	\end{proposition}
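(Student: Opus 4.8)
The plan is to sandwich $\phi(\xi(t;x))$ between two bounds: the bound $\psi_1$ furnished directly by Lemma \ref{our lemma}, and the modified bound $\eta_1$, whose initial condition replaces each intermediate Lie derivative by its positive part. First I would record that, since the $\delta_i$ solve Problem \ref{feasibility problem} they satisfy \eqref{delta ineq3}, so Lemma \ref{our lemma} applies with $\xi_0=(x,0)$ and yields
\begin{equation*}
\phi(\xi(t;x)) \leq \psi_1(x,t), \quad \forall t \in [0,\tau_{(x,0)}),
\end{equation*}
where $\tau_{(x,0)}$ is the escape time \eqref{tau xi0} of the trajectory from $\Omega_d$, and $\psi_1(x,t)$ is the first coordinate of $e^{At}y(x)$ with $y(x) = [\phi((x,0)),\,\mathcal{L}_F\phi((x,0)),\,\dots,\,\mathcal{L}_F^{p-1}\phi((x,0)),\,\delta_p]^{\top}$. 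The two remaining tasks are then (i) to upgrade this bound from the open interval $[0,\tau_{(x,0)})$ to the closed interval $[0,\tau(x)]$ for $x\in\setz$, and (ii) to pass from $\psi_1$ to $\eta_1$.

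For task (i) I would invoke Assumption \ref{assumption 1}: since $d$ is chosen so that $\Xi\subset\Omega_d$ and trajectories issuing from $\setz\times\{0\}$ remain in the compact set $\Xi$ for all $t\in[0,\tau(x)]$, the trajectory cannot reach the boundary of the open ball $\Omega_d$ before $\tau(x)$, hence $\tau(x)<\tau_{(x,0)}$ strictly. Consequently $[0,\tau(x)]\subset[0,\tau_{(x,0)})$ and the Lemma \ref{our lemma} bound holds on the closed interval, which is exactly \eqref{psi bounds phi}:
\begin{equation*}
\phi(\xi(t;x)) \leq \psi_1(x,t), \quad \forall t \in [0,\tau(x)],\ \forall x\in\setz.
\end{equation*}

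For task (ii) the key observation is that $\eta_1$ and $\psi_1$ are both outputs of the \emph{same} linear flow $\dot{\psi}=A\psi$ of \eqref{linear system}, differing only in their initial data. Comparing the initial condition $\eta(x,0)$ of \eqref{eta init cond} with $y(x)$ entrywise, the first and last entries coincide while each intermediate entry obeys $\max(\mathcal{L}_F^i\phi((x,0)),0)\geq\mathcal{L}_F^i\phi((x,0))$, so $\eta(x,0)\succeq y(x)$. Because every off-diagonal entry of $A$ is non-negative (the superdiagonal ones equal $1$ and the bottom-row ones are the $\delta_i\geq0$ of \eqref{delta positivity}), Proposition \ref{monotone systems proposition} shows that $\dot{\psi}=A\psi$ is monotone; hence the ordering of initial conditions propagates, $e^{At}\eta(x,0)\succeq e^{At}y(x)$ for all $t\geq0$, and in particular for the first coordinate
\begin{equation*}
\eta_1(x,t)\geq\psi_1(x,t), \quad \forall t\geq0.
\end{equation*}
Chaining this with the bound from task (i) gives \eqref{eta bounds phi}, completing the argument.

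I expect the main obstacle to be the two sign/interval subtleties rather than any computation. The first is arguing the \emph{strict} inequality $\tau(x)<\tau_{(x,0)}$, so that the open-interval conclusion of Lemma \ref{our lemma} survives at the endpoint $t=\tau(x)$; this is precisely where the compactness of $\Xi$ strictly inside the open ball $\Omega_d$ granted by Assumption \ref{assumption 1} must be used. The second is verifying that replacing the Lie derivatives by their positive parts can only enlarge the bound, which is exactly the role played by monotonicity of the comparison system. Everything else (the Metzler structure of $A$ and the entrywise comparison of initial data) is routine.
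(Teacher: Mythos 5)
Your proposal is correct and follows essentially the same route as the paper's proof: both use the monotonicity of the comparison system \eqref{linear system} (via Proposition \ref{monotone systems proposition}) to deduce $\eta_1(x,t)\geq\psi_1(x,t)$ from the entrywise ordering $\eta(x,0)\succeq y(x)$, and both use $\Xi\subset\Omega_d$ from Assumption \ref{assumption 1} to conclude $\tau(x)<\tau_{\xi_0}$ so that the Lemma \ref{our lemma} bound covers the closed interval $[0,\tau(x)]$. The only difference is the order in which the two steps are presented, which is immaterial.
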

	\begin{proof}
		Notice that  $\eta_1$ is the first component of the solution $\eta(x,t)$ to the same linear dynamical system \eqref{linear system} as $\psi$, with initial condition: $\psi(x,0)\preceq\eta(x,0)$. Since the system \eqref{linear system} is monotone, according to Proposition \ref{monotone systems proposition}, the following holds:
		\begin{equation*}
			\eta_1(x,t) \geq \psi_{1}(x,t) \geq \phi(\xi(t;x)), \quad \forall t \in [0,\tau_{\xi_0}) \text{ }\mathrm{and}\text{ } \forall x \in \setz,
		\end{equation*}
		since $x\in \setz\implies \xi_0=(x,0)\in\Xi\subset\Omega_d$. By the definition of $\Xi$ in Assumption \ref{assumption 1}, $\xi(t;x)\in\Xi$ for all $t\in[0,\tau(x)]$. But $\tau_{\xi_0}$ is defined in \eqref{tau xi0} as the escape time of $\xi(t;x)$ from $\Omega_d$, and $\Xi\subset\Omega_d$; i.e. $\tau(x)<\tau_{\xi_0}$. Thus \eqref{eta bounds phi} is satisfied.
	\end{proof}
	\begin{proposition} \label{eta increasing}
		The function $\eta_1(x,t)$ of \eqref{eta} is strictly increasing w.r.t. $t$ for all $t>0$.
	\end{proposition}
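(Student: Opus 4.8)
The plan is to reduce the statement to strict positivity of a single coordinate and then exploit the positive (Metzler) structure of the linear dynamics \eqref{linear system}. Writing the components of the solution as $\eta_1,\dots,\eta_{p+1}$, the shape of $A$ gives the integrator chain $\dot\eta_i=\eta_{i+1}$ for $i=1,\dots,p-1$, together with $\dot\eta_p=\sum_{j=0}^{p-1}\delta_j\eta_{j+1}+\eta_{p+1}$ and $\dot\eta_{p+1}=0$, so that $\eta_{p+1}\equiv\delta_p$. Since $\dot\eta_1=\eta_2$, proving the proposition amounts to showing that $\eta_2(x,t)>0$ for all $t>0$.

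The chief difficulty is that the full state does not start in the nonnegative orthant: by \eqref{eta init cond} and Assumption \ref{assumption 1} one has $\eta_1(x,0)=\phi((x,0))<0$, while every remaining coordinate is nonnegative because of the $\max(\cdot,0)$ clipping and $\delta_p\geq0$. Hence a naive positivity argument on $(\eta_1,\dots,\eta_{p+1})$ fails, the offending term being $\delta_0\eta_1$ inside $\dot\eta_p$. To absorb it, I would introduce the variable $\zeta_{p+1}:=\delta_0\eta_1+\delta_p$ and study the transformed state $(\eta_2,\dots,\eta_p,\zeta_{p+1})$. A short computation shows this state again obeys a linear system whose off-diagonal entries are the $1$'s of the chain together with $\delta_0,\dots,\delta_{p-1}$, hence nonnegative by \eqref{delta positivity}; moreover the reduced system is closed, since $\dot\zeta_{p+1}=\delta_0\dot\eta_1=\delta_0\eta_2$ depends only on $\eta_2$. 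Crucially, the feasibility constraint \eqref{delta init cond} yields $\zeta_{p+1}(x,0)=\delta_0\phi((x,0))+\delta_p\geq\epsilon>0$ for $x\in\setz$, while $\eta_2(x,0),\dots,\eta_p(x,0)\geq0$, so the transformed initial condition lies in the nonnegative orthant with strictly positive last entry.

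With this in place the remainder is monotone-systems bookkeeping. Because the reduced matrix is Metzler, the system is monotone by Proposition \ref{monotone systems proposition}; comparing its trajectory with the zero equilibrium, which is dominated by the nonnegative initial datum, gives $\eta_2,\dots,\eta_p\geq0$ and $\zeta_{p+1}\geq0$ for all $t\geq0$. Then $\dot\zeta_{p+1}=\delta_0\eta_2\geq0$ forces $\zeta_{p+1}(t)\geq\zeta_{p+1}(0)\geq\epsilon$, whence $\dot\eta_p=\delta_1\eta_2+\dots+\delta_{p-1}\eta_p+\zeta_{p+1}\geq\epsilon>0$ for all $t\geq0$, so $\eta_p(t)\geq\epsilon t>0$ for $t>0$. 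Finally I would climb the integrator chain: if $\eta_{i+1}(t)>0$ on $(0,\infty)$ then $\dot\eta_i=\eta_{i+1}>0$ there and $\eta_i(0)\geq0$, so $\eta_i(t)>0$ for $t>0$; induction from $i=p-1$ down to $i=2$ yields $\eta_2(t)>0$ for all $t>0$, and therefore $\dot\eta_1=\eta_2>0$, i.e. $\eta_1(x,t)$ is strictly increasing for $t>0$.

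I expect the only genuinely delicate point to be the reformulation via $\zeta_{p+1}$: verifying that the reduced system is closed and Metzler and recognizing that \eqref{delta init cond} is precisely the condition rendering its initial datum strictly positive. Once that substitution is set up, both the propagation of nonnegativity and the subsequent chain induction are routine.
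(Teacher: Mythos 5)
Your proof is correct, but it follows a genuinely different route from the paper's. The paper argues pointwise at $t=0$: using \eqref{eta init cond}, \eqref{delta positivity} and \eqref{delta init cond} it shows that every time-derivative $\eta_1^{(i)}(x,0)$ with $i\geq 1$ is nonnegative and that $\eta_1^{(p)}(x,0)\geq\delta_0\phi\big((x,0)\big)+\delta_p>0$, and then concludes strict monotonicity on $(0,\infty)$ from the (entire) Taylor expansion of $t\mapsto C\boldsymbol{e}^{At}\eta(x,0)$. You instead argue dynamically: the substitution $\zeta_{p+1}=\delta_0\eta_1+\delta_p$ absorbs the one sign-indefinite term $\delta_0\eta_1$ into a closed subsystem in $(\eta_2,\dots,\eta_p,\zeta_{p+1})$ whose matrix is nonnegative by \eqref{delta positivity} and whose initial datum is nonnegative with $\zeta_{p+1}(x,0)\geq\epsilon$ precisely because of \eqref{delta init cond}; forward invariance of the orthant (Proposition \ref{monotone systems proposition}) plus the integrator chain then gives $\dot\eta_1=\eta_2>0$ for $t>0$. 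Both arguments rest on the same two structural facts --- the sign constraints \eqref{delta positivity} and the strict inequality \eqref{delta init cond} --- and both implicitly restrict to $x\in\setz$ (harmless, since the proposition is applied on $D\subset\setz$ in the proof of Theorem \ref{main theorem}). Your version makes the role of \eqref{delta init cond} more transparent and replaces the paper's somewhat terse appeal to analyticity (``Similarly, $\eta_1^{(i)}(x,0)\geq 0$ for all $i$ \dots\ This implies \dots'') with an explicit invariance argument, at the cost of the coordinate change. The only loose end is the degenerate case $p=1$, where $\dot\eta_1$ equals $\zeta_2$ rather than $\eta_2$; your machinery still covers it, since $\dot\zeta_2=\delta_0\zeta_2$ keeps $\zeta_2(t)\geq\epsilon\boldsymbol{e}^{\delta_0 t}>0$, but the reduction ``it suffices to show $\eta_2>0$'' as stated presumes $p\geq 2$.
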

	\begin{proof}
		In the following $\eta_1^{(i)}(x,t)$ denotes the $i$-th derivative of $\eta_1(x,t)$ w.r.t. $t$. At $t=0$, initial condition \eqref{eta init cond} implies that $\eta_1^{(i)}(x,0) \geq 0$ for all $i =1,\dots,p-1$. For $\eta_1^{(p)}(x,0)$:
		\begin{equation*}
			\eta_1^{(p)}(x,0) = \sum_{i=0}^{p-1}\delta_i\eta_{i+1}(x,0)+\delta_p \geq \delta_0\phi\Big((x,0)\Big) + \delta_p > 0,
		\end{equation*}
		since $\eta_{i+1}(x,0)\geq0$ for all $i=0,\dots,p-1$, and \eqref{delta init cond} and \eqref{delta positivity} hold. Differentiating $\eta_1^{(p)}$ w.r.t. $t$, we get:
		\begin{equation*}
			\eta_1^{(p+1)}(x,0)=\sum_{i=0}^{p-1}\delta_i\eta_1^{(i+1)}(x,0)\geq 0.
		\end{equation*} 
		Similarly, $\eta_1^{(i)}(x,0)\geq0$, for all $i$.
		Hence $\eta_1^{(i)}(x,0)\geq0$ for all $i\in\mathbb{N}-\{0\}$, and in particular $\eta_1^{(p)}(x,0) > 0$. This implies that the function $\eta_1(x,t)$ is strictly increasing for all $t>0$.
	\end{proof}
	We are ready to prove Theorem \ref{main theorem}. 
	\begin{proof}[\textbf{Proof of Theorem \ref{main theorem}}]
		First, notice that $\mu(x,t)$ satisfies \eqref{scaling of bound}, by construction.
		Let $D=\{x\in\mathbb{R}^n: |x|=r\}$, with $r>0$ such that $D \subset \setz$. Notice that for $x\in D$: $\mu(x,t)=\eta(x,t)$. Thus, according to Proposition \ref{eta bounds phi proposition} :
		\begin{equation}
			\mu(x,t)=\eta_1(x,t)\geq\phi(\xi(t;x)), \quad \forall t \in [0,\tau(x)] \text{ }\mathrm{and}\text{ } \forall x \in D.
			\label{eta bounds phi D}
		\end{equation}
		Consider now any $x_0\in\mathbb{R}^n-\{0\}$ and a $\lambda>0$ such that $x_D=\lambda x_0\in D$. Employing \eqref{scaling of bound}, \eqref{trig_cond_scaling} and \eqref{eta bounds phi D} we get:
		\begin{equation*}
			\begin{aligned}
				&\mu(x_D,t)\geq\phi(\xi(t;x_D)), \quad \forall t \in [0,\tau(x_D)] \iff\\
				&\mu(x_0,\lambda^{\alpha}t) \geq \phi(\xi(\lambda^{\alpha}t;x_0)), \quad \forall t \in [0,\tau(x_D)] \iff\\
				&\mu(x_0,t) \geq \phi(\xi(t;x_0)), \quad \forall x_0\in\mathbb{R}^n-\{0\} \text { and } t \in [0,\tau(x_0)],
			\end{aligned}
		\end{equation*}
		since $\lambda^{\alpha}\tau(x_D)=\tau(x_0)$. Thus, $\mu(x,t)$ satisfies \eqref{bound time validity}.
		
		It remains to be shown that $\mu(x,t)$ satisfies \eqref{one-zero-crossing of bound}. Notice that $\mu(x,0) = \phi\Big((x,0)\Big) < 0$ for all $x\in\mathbb{R}^n-\{0\}$. Moreover, since \eqref{bound time validity} holds, we get that:
		\begin{equation*}
			\mu(x,\tau(x))\geq\phi(\xi(\tau(x);x))=0.
		\end{equation*}
		From Assumption \ref{assumption 1} we have that such a $\tau(x)$ always exists. Thus, for all $x\in\mathbb{R}^n-\{0\}$ there exists $\tau_{\downarrow}(x)>0$ such that $\mu(x,\tau_{\downarrow}(x))=0$. Moreover, since $\mu(x,t)=\eta(x,t)$ for $x\in D$, then according to Proposition \ref{eta increasing} $\mu(x,t)$ is strictly increasing w.r.t. $t$ for all $t>0$ and for all $x\in D$. Finally, incorporating \eqref{scaling of bound} we get that: $\mu(x,t)$ is strictly increasing w.r.t. $t$ for all $t>0$ and for all $x\in \mathbb{R}^n-\{0\}$; i.e. $\tau_{\downarrow}(x)$ is unique. Thus, $\mu(x,t)$ satisfies \eqref{one-zero-crossing of bound}.
	\end{proof}
	
	\subsection{Non-Homogeneous Systems}
	As stated in Remark \ref{framework remark}, in \cite{anta2012isochrony} a procedure is proposed that renders any smooth nonlinear system homogeneous of degree $\alpha > 0$, by embedding it to higher dimensions and adding an extra variable $w$, with dynamics $\dot{w}=0$. Specifically, a nonlinear system:
	\begin{equation} \label{original system}
		\dot{\zeta}(t)=f(\zeta(t)),
	\end{equation} with $\zeta(t)\in\real^n$ is homogenized as follows:
	\begin{equation}\label{homogenization procedure}
		\begin{bmatrix}
			\dot{\zeta}(t)\\\dot{w}(t)
		\end{bmatrix} = \begin{bmatrix}w^{\alpha+1}f_1(w^{-1}\zeta(t))\\
			w^{\alpha+1}f_2(w^{-1}\zeta(t))\\ \vdots\\ w^{\alpha+1}f_n(w^{-1}\zeta(t))\\0 \end{bmatrix} = \tilde{f}(\zeta(t),w(t)).
	\end{equation}
	Likewise, an ETC system \eqref{etc system} is homogenized by introducing $w$ and the corresponding dummy measurement error $\varepsilon_w$ as:
	\begin{equation} \label{homogenized_etc system}
		\begin{aligned}
			&\dot{\xi}(t)= \begin{bmatrix}
				\dot{\zeta}(t)\\\dot{w}(t)\\\dot{\varepsilon}_\zeta(t)\\\dot{\varepsilon}_w(t)
			\end{bmatrix} = \begin{bmatrix} \tilde{f}(\zeta(t), \varepsilon_z(t),w(t))\\0\\ -\tilde{f}(\zeta(t), \varepsilon_z(t),w(t))\\0 \end{bmatrix}=F(\xi(t)),
		\end{aligned}
	\end{equation}
	where $\tilde{f}(\zeta(t), \varepsilon_z(t),w(t))$ is obtained as in \eqref{homogenization procedure}.
	
	An example of the use of the homogenization procedure is demonstrated in Section VII.B. Similarly, one can homogenize a non-homogeneous triggering function $\phi(\zeta(t;x_0),\varepsilon_\zeta(t;0))$ as: $\tilde{\phi}(\xi(t;x_0,w_0)) = w^{\theta+1}\phi(w^{-1}\zeta(t;x_0),w^{-1}\varepsilon_\zeta(t;0))$. Observe that the trajectories of the original system \eqref{original system} with initial condition $x\in\real^n$ coincide with the ones of \eqref{homogenization procedure} with initial condition $(x,1)\in\real^{n+1}$, i.e. on the hyperplane $w=1$. Hence, the inter-event times of the original system $\tau(x)$ coincide with the inter-event times $\tau\Big((x,1)\Big)$ of \eqref{homogenization procedure}. Consequently, in order to apply the proposed region-based STC scheme to a non-homogeneous nonlinear system, we first homogenize it by embedding it to $\real^{n+1}$, and then derive inner-approximations of isochronous manifolds of the extended system \eqref{homogenization procedure}, by replacing $x$ with $(x,w)$ in \eqref{mu}.
	
	However, a technical detail arises that needs to be emphasized. Most triggering functions that are designed for asymptotic stabilization of the origin (e.g. \cite{tabuada2007etc}) satisfy $\phi\Big((0,0)\Big)=0$. Thus, deriving the function $\mu(x,w,t)$ as in Theorem \ref{main theorem} for the extended system \eqref{homogenization procedure}, results for all points $(0,w)\in\real^{n+1}-\{0\}$ on the $w$-axis in:
	\begin{equation*}
		\mu(0,w,t) = C (\tfrac{|w|}{r})^{\theta+1} \boldsymbol{e}^{A(\frac{|w|}{r})^{\alpha}t} 
		\begin{bmatrix}
			0\\\max\bigg( \mathcal{L}_f\phi\Big((0,0)\Big),0 \bigg)\\
			\vdots\\
			\max\bigg( \mathcal{L}_f^{p-1}\phi\Big((0,0)\Big),0 \bigg)\\\delta_p
		\end{bmatrix}
	\end{equation*}
	This implies that for all these points: $\mu(0,w,t)>0$ for all $t>0$. Hence, the $w$-axis does not belong to any inner-approximation $\underline{M}_{\tau_{\star}}=\{(x,w)\in\real^{n+1}:\mu(x,w,\tau_{\star})=0\}$ of isochronous manifolds. In other words, all inner-approximations $\underline{M}_{\tau_{\star}}$ are punctured by the $w$-axis and obtain a singularity at the origin, as shown in Fig. \ref{torus}.
	\begin{figure}[!h]
		\centering
		\includegraphics[width=2in]{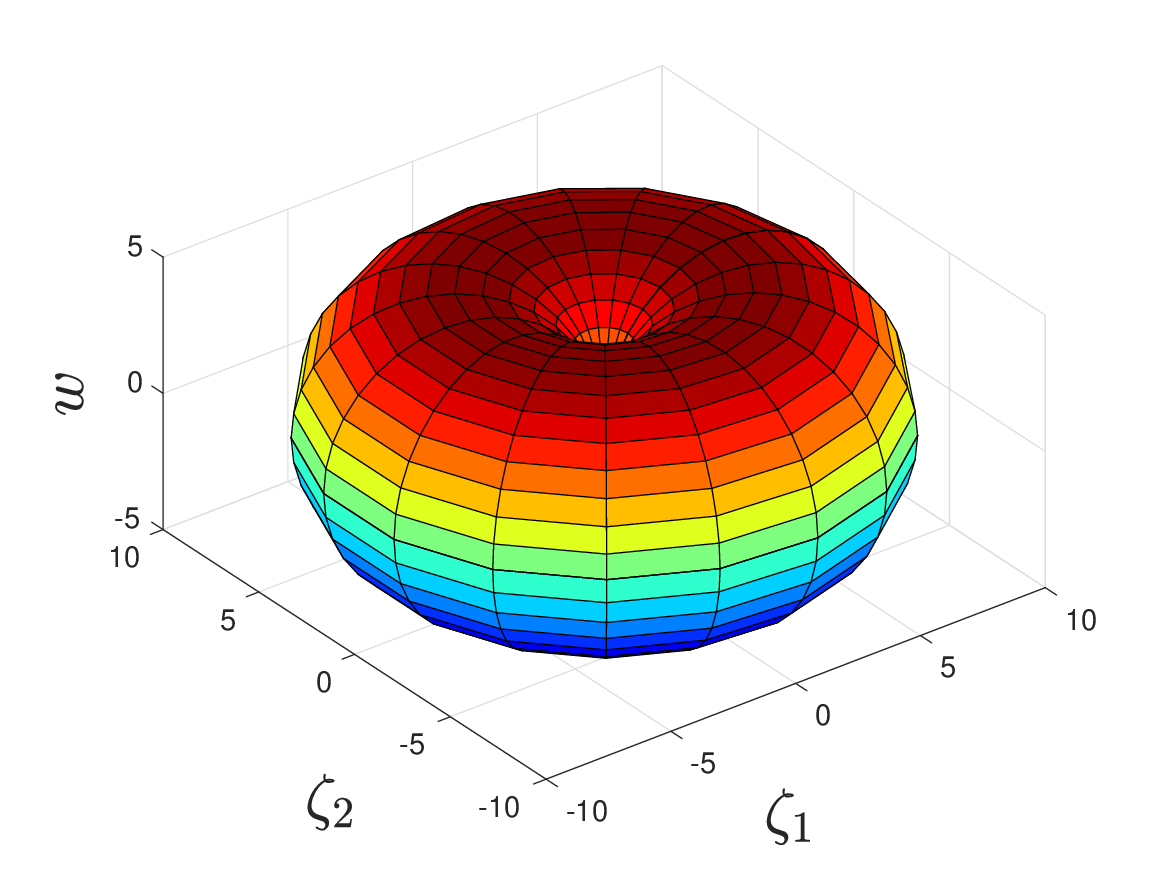}
		\caption{Inner-approximation $\underline{M}_{\tau_{\star}}$ of isochronous manifolds of a homogenized system.}
		\label{torus}
	\end{figure} 
	Consequently, given a finite set of times $\{\tau_1,\dots,\tau_q\}$, discretizing the state-space of the extended system into regions $\reg_i$ delimited by inner-approximations $\underline{M}_{\tau_{i}}$, will always result in a neighbourhood around the $w$-axis not belonging to any region $\reg_i$, as depicted in Fig. \ref{nonhomogeneous_discr}. This implies that a neighbourhood around the origin of the original system \eqref{original system}, which is mapped to a subset of the hyperplane $w=1$ around the $w$-axis in the augmented space $\real^{n+1}$, is not contained to any region $\reg_i$. Thus, no STC inter-event time can be assigned to the points of this neighbourhood.
	\begin{figure}[!h]
		\centering
		\includegraphics[width=2in]{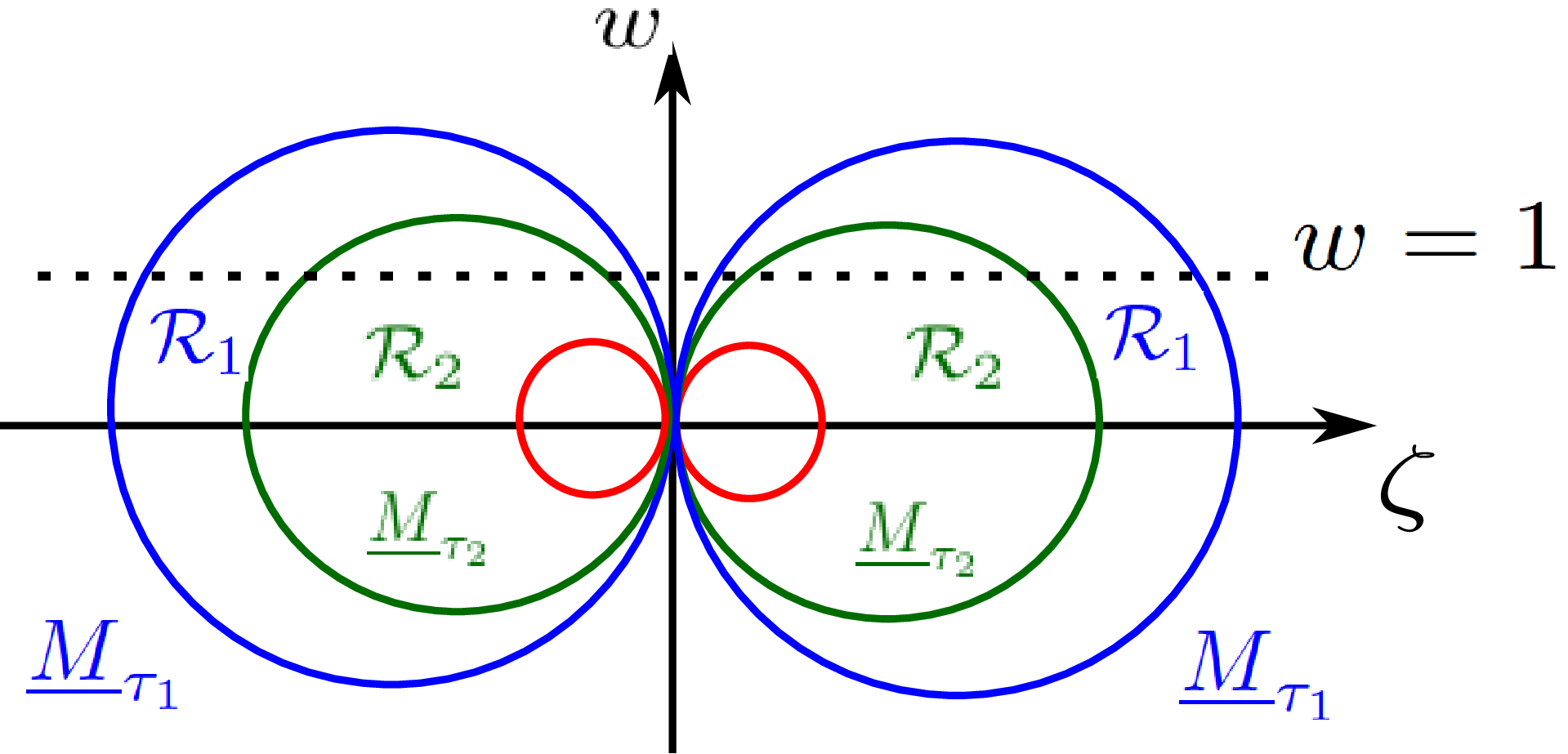}
		\caption{Discretization of the state space of a homogenized system into regions $\reg_i$ delimited by inner-approximations $\underline{M}_{\tau_i}$ (coloured lines) of isochronous manifolds.}
		\label{nonhomogeneous_discr}
	\end{figure} 
	
	However, note that this neighbourhood can be made arbitrarily small, by selecting a sufficiently small time $\tau_1$ for the outermost inner-approximation $\underline{M}_{\tau_1}$. Thus, in order to apply the region-based STC scheme in practice, first we make this neighbourhood arbitrarily small, and then we treat it differently by associating it to a sampling time that can be designed e.g. according to periodic sampling techniques that guarantee stability (e.g. \cite{carnevale2007mati}). In the numerical example of Section VII.B we completely neglect this region, as it was so small that it wasn't even reached during the simulation.
	\begin{remark}\label{last remark}
		Note that, as the $w$-axis acts as a singularity for both the isochronous manifolds $M_{\tau_{\star}}$ (the actual inter-event times there are technically 0, and in practice they could be anything) and their inner-approximations $\underline{M}_{\tau_{\star}}$, the inner-approximations might look very different than the actual manifolds near the $w$-axis.
	\end{remark}
	\begin{remark}\label{other_trig_fun_remark}
		The above technical issue does not arise in cases where $\phi\Big((0,0)\Big)\neq0$. Such an example is the widely used mixed-triggering function $\phi(\xi(t)) = |\varepsilon_{\zeta}(t)|^2-\sigma|\zeta(t)|^2-\epsilon^2$ (e.g. \cite{small_gain_robust_etc}), where $\sigma>0$ is appropriately chosen and $\epsilon>0$.
	\end{remark}	
	\bibliography{bibliography.bib}

\begin{thebibliography}{10}
\providecommand{\url}[1]{#1}
\csname url@samestyle\endcsname
\providecommand{\newblock}{\relax}
\providecommand{\bibinfo}[2]{#2}
\providecommand{\BIBentrySTDinterwordspacing}{\spaceskip=0pt\relax}
\providecommand{\BIBentryALTinterwordstretchfactor}{4}
\providecommand{\BIBentryALTinterwordspacing}{\spaceskip=\fontdimen2\font plus
\BIBentryALTinterwordstretchfactor\fontdimen3\font minus
  \fontdimen4\font\relax}
\providecommand{\BIBforeignlanguage}[2]{{%
\expandafter\ifx\csname l@#1\endcsname\relax
\typeout{** WARNING: IEEEtran.bst: No hyphenation pattern has been}%
\typeout{** loaded for the language `#1'. Using the pattern for}%
\typeout{** the default language instead.}%
\else
\language=\csname l@#1\endcsname
\fi
#2}}
\providecommand{\BIBdecl}{\relax}
\BIBdecl

\bibitem{anta2012isochrony}
{A. Anta and P. Tabuada}, ``Exploiting isochrony in self-triggered control,''
  \emph{IEEE Transactions on Automatic Control}, vol.~57, no.~4, pp. 950--962,
  2012.

\bibitem{arzen1999}
K.-E. {\AA}arz{\'e}n, ``A simple event-based pid controller,'' \emph{IFAC
  Proceedings Volumes}, vol.~32, no.~2, pp. 8687--8692, 1999.

\bibitem{heemels1999}
W.~Heemels, R.~Gorter, A.~Van~Zijl, P.~Van~den Bosch, S.~Weiland, W.~Hendrix,
  and M.~Vonder, ``Asynchronous measurement and control: a case study on motor
  synchronization,'' \emph{Control Engineering Practice}, vol.~7, no.~12, pp.
  1467--1482, 1999.

\bibitem{tabuada2007etc}
P.~Tabuada, ``Event-triggered real-time scheduling of stabilizing control
  tasks,'' \emph{IEEE Transactions on Automatic Control}, vol.~52, no.~9, pp.
  1680--1685, 2007.

\bibitem{mazo2011decentralized}
M.~Mazo and P.~Tabuada, ``Decentralized event-triggered control over wireless
  sensor/actuator networks,'' \emph{IEEE Transactions on Automatic Control},
  vol.~56, no.~10, pp. 2456--2461, 2011.

\bibitem{girard2015dynamicetc}
A.~Girard, ``Dynamic triggering mechanisms for event-triggered control,''
  \emph{IEEE Transactions on Automatic Control}, vol.~60, no.~7, pp.
  1992--1997, 2015.

\bibitem{antunes2015etc}
B.~A. Khashooei, D.~J. Antunes, and W.~P. M.~H. Heemels, ``An event-triggered
  policy for remote sensing and control with performance guarantees,'' in
  \emph{Proceedings of the IEEE Conference on Decision and Control}, 2015, pp.
  4830--4835.

\bibitem{lunze2011etc}
D.~Lehmann and J.~Lunze, ``Event-based output-feedback control,'' in
  \emph{Mediterranean Conference on Control and Automation}, 2011, pp.
  982--987.

\bibitem{lunze2010state}
J.~Lunze and D.~Lehmann, ``A state-feedback approach to event-based control,''
  \emph{Automatica}, vol.~46, no.~1, pp. 211--215, 2010.

\bibitem{velasco2003self}
M.~Velasco, J.~Fuertes, and P.~Marti, ``The self triggered task model for
  real-time control systems,'' in \emph{Work-in-Progress Session of the 24th
  IEEE Real-Time Systems Symposium}, vol. 384, 2003.

\bibitem{tosample}
A.~Anta and P.~Tabuada, ``To sample or not to sample: Self-triggered control
  for nonlinear systems,'' \emph{IEEE Transactions on Automatic Control},
  vol.~55, no.~9, pp. 2030--2042, 2010.

\bibitem{italy2013stcnonlinear}
M.~D. Di~Benedetto, S.~Di~Gennaro, and A.~D'innocenzo, ``Digital self-triggered
  robust control of nonlinear systems,'' \emph{International Journal of
  Control}, vol.~86, no.~9, pp. 1664--1672, 2013.

\bibitem{kallej2014stcnonlinear}
U.~Tiberi and K.~H. Johansson, ``A simple self-triggered sampler for perturbed
  nonlinear systems,'' \emph{Nonlinear Analysis: Hybrid Systems}, vol.~10,
  no.~1, pp. 126--140, 2013.

\bibitem{mazo2010stciss}
M.~Mazo~Jr., A.~Anta, and P.~Tabuada, ``An iss self-triggered implementation of
  linear controllers,'' \emph{Automatica}, vol.~46, no.~8, pp. 1310--1314,
  2010.

\bibitem{mazo2014stclinear}
M.~Mazo, A.~Anta, and P.~Tabuada, ``On self-triggered control for linear
  systems: Guarantees and complexity,'' in \emph{European Control Conference},
  2009, pp. 3767--3772.

\bibitem{tolic2012stc}
D.~Tolic, R.~G. Sanfelice, and R.~Fierro, ``Self-triggering in nonlinear
  systems: A small gain theorem approach,'' in \emph{Mediterranean Conference
  on Control and Automation}, 2012, pp. 941--947.

\bibitem{heemels2014stclq}
T.~Gommans, D.~Antunes, T.~Donkers, P.~Tabuada, and M.~Heemels,
  ``Self-triggered linear quadratic control,'' \emph{Automatica}, vol.~50,
  no.~4, pp. 1279--1287, 2014.

\bibitem{fiter2012state}
C.~Fiter, L.~Hetel, W.~Perruquetti, and J.-P. Richard, ``A state dependent
  sampling for linear state feedback,'' \emph{Automatica}, vol.~48, no.~8, pp.
  1860--1867, 2012.

\bibitem{wang2009stc}
X.~Wang and M.~D. Lemmon, ``Self-triggered feedback control systems with
  finite-gain l2stability,'' \emph{IEEE Transactions on Automatic Control},
  vol.~54, no.~3, pp. 452--467, 2009.

\bibitem{wang2010stc}
{X. Wang and M. D. Lemmon}, ``Self-triggering under state-independent
  disturbances,'' \emph{IEEE Transactions on Automatic Control}, vol.~55,
  no.~6, pp. 1494--1500, 2010.

\bibitem{theodosis2018self}
D.~Theodosis and D.~V. Dimarogonas, ``Self-triggered control under actuator
  delays,'' in \emph{2018 IEEE Conference on Decision and Control}.\hskip 1em
  plus 0.5em minus 0.4em\relax IEEE, 2018, pp. 1524--1529.

\bibitem{2012introtoetc_stc}
W.~P. M.~H. Heemels, K.~H. Johansson, and P.~Tabuada, ``An introduction to
  event-triggered and self-triggered control,'' in \emph{Proceedings of the
  IEEE Conference on Decision and Control}, 2012, pp. 3270--3285.

\bibitem{sostools}
S.~Prajna, A.~Papachristodoulou, and P.~A. Parrilo, ``Introducing sostools: A
  general purpose sum of squares programming solver,'' in \emph{Proceedings of
  the 41st IEEE Conference on Decision and Control}, vol.~1.\hskip 1em plus
  0.5em minus 0.4em\relax IEEE, 2002, pp. 741--746.

\bibitem{comparison1971}
R.~Gunderson, ``A comparison lemma for higher order trajectory derivatives,''
  \emph{Proceedings of the American Mathematical Society}, pp. 543--548, 1971.

\bibitem{counterexample}
J.~Kapinski, S.~Sankaranarayanan, J.~V. Deshmukh, and N.~Aréchiga,
  ``Simulation-guided lyapunov analysis for hybrid dynamical systems,'' in
  \emph{Proceedings of the 17th International Conference on Hybrid Systems:
  Computation and Control}, 2014, pp. 133--142.

\bibitem{dreal}
S.~Gao, S.~Kong, and E.~M. Clarke, ``dreal: An smt solver for nonlinear
  theories over the reals,'' in \emph{International Conference on Automated
  Deduction}.\hskip 1em plus 0.5em minus 0.4em\relax Springer, 2013, pp.
  208--214.

\bibitem{homogeneity}
M.~Kawski, ``Geometric homogeneity and stabilization,'' in \emph{Nonlinear
  Control Systems Design 1995}.\hskip 1em plus 0.5em minus 0.4em\relax
  Elsevier, 1995, pp. 147--152.

\bibitem{jet_paper}
M.~Krstic and P.~V. Kokotovic, ``Lean backstepping design for a jet engine
  compressor model,'' in \emph{Proceedings of the 4th IEEE Conference on
  Control Applications}.\hskip 1em plus 0.5em minus 0.4em\relax IEEE, 1995, pp.
  1047--1052.

\bibitem{lemma_counterexample}
V.~Meigoli and S.~K.~Y. Nikravesh, ``A new theorem on higher order derivatives
  of lyapunov functions,'' \emph{ISA transactions}, vol.~48, no.~2, pp.
  173--179, 2009.

\bibitem{postoyan2015framework}
R.~Postoyan, P.~Tabuada, D.~Nesic, and A.~A. Martinez, ``A framework for the
  event-triggered stabilization of nonlinear systems.'' \emph{IEEE Transactions
  on Automatic Control}, vol.~60, no.~4, pp. 982--996, 2015.

\bibitem{delimpaltadakis_tac}
G.~Delimpaltadakis and M.~Mazo, ``Isochronous partitions for region-based
  self-triggered control,'' \emph{IEEE Transactions on Automatic Control},
  vol.~66, no.~3, pp. 1160--1173, 2020.

\bibitem{monotone2008}
H.~L. Smith, \emph{Monotone dynamical systems: an introduction to the theory of
  competitive and cooperative systems}.\hskip 1em plus 0.5em minus 0.4em\relax
  American Mathematical Soc., 2008, no.~41.

\bibitem{carnevale2007mati}
D.~Carnevale, A.~R. Teel, and D.~Nesic, ``A lyapunov proof of an improved
  maximum allowable transfer interval for networked control systems,''
  \emph{IEEE Transactions on Automatic Control}, vol.~52, no.~5, pp. 892--897,
  2007.

\bibitem{small_gain_robust_etc}
T.~Liu and Z.-P. Jiang, ``A small-gain approach to robust event-triggered
  control of nonlinear systems,'' \emph{IEEE Transactions on Automatic
  Control}, vol.~60, no.~8, pp. 2072--2085, 2015.

\end{thebibliography}
	\bibliographystyle{IEEEtran}
\end{document}